\newcommand{\ExpectMeas}[2]{\mathbb{E}^{#1}\left[#2\right]}
\newcommand{\keywords}[1]{{\bf Keywords:} {#1}}
\newcommand{\subclass}[1]{{\bf Mathematics Subject Classification (2000):} {#1}}
\newcommand{\JEL}[1]{{\bf JEL:} {#1}}
\newcommand{\qed}{\hfill$\Box$}
\newtheorem{thm}{theorem}[section]
\newtheorem{theorem}[thm]{Theorem}
\newtheorem{proposition}[thm]{Proposition}
\newtheorem{corollary}[thm]{Corollary}
\newtheorem{lemma}[thm]{Lemma}
\newenvironment{proof}{{\bf Proof }}{}
\newcommand{\MakeTitle}{\maketitle\newcommand{\and}{$\cdot$ }}
\title{A Family of Maximum Entropy Densities Matching Call Option Prices
	\thanks{
		We would like to thank Iain Clark, Andrey Gal, Alex Langnau and Olivier Taghizadeh for helpful comments and suggestions.
	}
}
\author{
	Cassio Neri
	\thanks{Lloyds Banking Group, \texttt{cassio.neri@lloydsbanking.com}.}
	\and
	Lorenz Schneider
	\thanks{Center for Financial Risks Analysis (CEFRA), EMLYON Business School, \texttt{schneider@em-lyon.com}.}
}
\date{\today}
\begin{document}

\MakeTitle

\begin{abstract}
We investigate the position of the Buchen-Kelly density \cite{BuchenKelly1996} in the family of entropy maximising densities from \cite{NeriSchneider2009} which all match European call option prices for a given maturity observed in the market.
Using the Legendre transform which links the entropy function and the cumulant generating function, we show that it is both the unique continuous density in this family and the one with the greatest entropy.
We present a fast root-finding algorithm that can be used to calculate the Buchen-Kelly density,
and give upper boundaries for three different discrepancies that can be used as convergence criteria.
Given the call prices, arbitrage-free digital prices at the same strikes can only move within upper and lower boundaries given by left and right call spreads.
As the number of call prices increases, these bounds become tighter, and we give two examples where the densities converge to the Buchen-Kelly density in the sense of relative entropy when we use centered call spreads as proxies for digital prices.
As pointed out by Breeden and Litzenberger \cite{BreedenLitzenberger1978}, in the limit a continuous set of call prices completely determines the density.

\bigskip

\keywords{Entropy \and Information Theory \and $I$-Divergence \and Asset Distribution \and Option Pricing}

\bigskip

\subclass{91B24 \and 91B28 \and 91B70 \and 94A17}

\bigskip

\JEL{C16 \and C63 \and G13}
\end{abstract}

\section{Introduction}
\label{Introduction}

An important problem in derivatives valuation consists in finding suitable probability densities for the underlying asset such that observed market quotes are matched under risk-neutral pricing.
In practice, many schemes rely on somehow choosing an interpolation method and hoping that the difference between one choice and another is not too big.
However, since typically the first and second derivatives of the interpolating function have to be taken, these choices inevitably do end up having a big impact on derivative prices.

The concept of entropy provides a clear criterion of how to deal with this problem in the most unbiased way possible.
Making hard to justify ad-hoc assumptions becomes unnecessary, and it is therefore no wonder that this approach is becoming more and more popular in the financial literature (see \cite{AvellanedaFriedmanHolmesSamperi1997}, \cite{BorweinChoksiMarechal2003}, \cite{BrodyBuckleyConstantinou2007}, \cite{BrodyBuckleyConstantinouMeister2005}, \cite{BrodyBuckleyMeister2004}, \cite{BuchenKelly1996}, \cite{DempsterMedovaYang2007}, \cite{Frittelli2000}, \cite{Gulko1999}, \cite{Gulko2002}).

In this paper, we continue our investigation on Maximum Entropy Densities (MEDs) begun in \cite{NeriSchneider2009}.
We showed that, under risk-neutral pricing and non arbitrage conditions, there is a unique density compatible with market prices of European call and digital options which maximises entropy.
This density turns out to be the unique piece-wise exponential density which matches the given market prices.

The first question we had in mind was how the Buchen-Kelly density introduced in \cite{BuchenKelly1996} fit into this framework.
The Buchen-Kelly density is obtained by imposing only the European call prices as constraints and finding the corresponding entropy maximiser. The result is a continuous piece-wise exponential density.

Fixing the call prices and letting the digital prices vary (within certain bounds given by left and right call spreads) leads to a whole family $\mathcal{G}$ of MEDs which all match the call prices.
The probability mass assigned to each interval between two strikes can vary significantly across MEDs which all meet the same constraints imposed by the call prices.

Our first result is to express the entropy $H(g)$ for $g\in\mathcal{G}$ directly in terms of market data (Theorem \ref{H(D)}) given by call and digital prices.
Since the call prices are fixed, we see $H$ as a function of digital prices only.
Using this result, we derive a simple formula that expresses the sensitivity of $H$ with respect to changes in digital prices (Theorem \ref{H'(D)}).
We also relate this sensitivity to the continuity of densities (Corollary \ref{Corollary1}).

Then, by explicitly calculating the Hessian matrix of $H$, we show that $H$ has a unique critical point,
and that the corresponding Buchen-Kelly density is the unique continuous density in $\mathcal{G}$ (Corollary \ref{BKCharacterization}).

Since we have shown that $H$ is concave with respect to digital prices and calculated the gradient and Hessian of $H$ explicitly,
it is straightforward to implement a Newton-Raphson algorithm that finds the zero of the gradient.
In particular, the Hessian matrix is tridiagonal and very easy to invert.
By showing that $H$ is indeed {\it strongly} concave (Proposition \ref{StrongConcavity}) and explicitly calculating an upper boundary for the Hessian's biggest eigenvalue,
we are able to give upper boundaries for the difference between the entropy of the Buchen-Kelly density and that of the density given by
a set of digital prices, and for the Euclidean norm of the difference between the Buchen-Kelly digital prices and the given ones (Proposition \ref{Estimates}).
Also, by using the Csisz\'ar-Kullback inequality together with these results, we find an upper bound for the $L^1$-distance of a given density in $\mathcal{G}$
from the Buchen-Kelly density (Theorem \ref{Th:L1Difference}).
We conclude the section by illustrating the algorithm with two low-dimensional examples, and by pointing out some of its advantages over the
original algorithm presented in \cite{BuchenKelly1996}.

Finally, we study how the family $\mathcal{G}$ becomes more constrained as the set of strikes increases.
Given the call prices, arbitrage-free digital prices at the same strikes can only move within upper and lower boundaries given by left and right call spread prices.
As the set of strikes increases, and call prices get closer to one another, these bounds become tighter, and we give two examples, one fictitious and one with call option data on the S\&P, where the densities converge to the Buchen-Kelly density in the sense of relative entropy when we use centered call spreads as proxies for digital prices. In the limit, as pointed out by Breeden and Litzenberger \cite{BreedenLitzenberger1978}, a continuous set of call prices completely determines the density.

\section{The MED Obtained from Call Prices - The Buchen-Kelly Density}
\label{sec:BuchenKellyDensity}

For a fixed maturity $T$, assume prices $C(K_i)$ of European call options on a particular asset are observed in the market for a set of $n$ strikes $0 < K_1 < ... < K_n$.
Denote by $\tilde{C}_i := C(K_i) / DF(T)$ the undiscounted prices of such calls, where $DF(T)$ is the discount factor from today to maturity $T$.

For notational convenience, we introduce the ``strikes'' $K_0 := 0$ and $K_{n+1} := \infty$.
We set $\tilde{C}_0$ to the $T$-forward price of the asset and $\tilde{C}_{n+1} := 0$.

Let $g$ be a strictly positive density over $[0,\infty[$ for $S(T)$ -- the underlying asset price at time $T$ -- that under risk-neutral pricing matches the call prices, that is,
\begin{equation}
\ExpectMeas{g}{(S(T) - K_i)^+} = \int_{K_i}^\infty (x - K_i) g(x) dx = \tilde{C}_i, \quad \forall i=0,...,n.
\label{CallConstraints}
\end{equation}

It is well known that call prices implied by $g$ decrease with strikes.
More precisely, $\ExpectMeas{g}{(S(T) - K)^+}$ is strictly decreasing on $K\in [0,\infty[$.
Therefore, the following non-arbitrage condition holds:
\begin{equation}
\tilde{C}_i > \tilde{C}_{i+1}, \quad \forall i = 0, ..., n.
\label{NoArbitrageCalls}
\end{equation}

Buchen and Kelly \cite{BuchenKelly1996} have shown that, under \eqref{NoArbitrageCalls}, there exists a unique density $g$ on $[0,\infty[$ that maximises entropy and matches call prices \eqref{CallConstraints}.
In the sequel we refer to this MED as the Buchen-Kelly density.

Using a slightly different notation, Buchen and Kelly show that their MED is the unique density of the form
$g(x)=\mu^{-1} \exp \left( \sum_{i=0}^n \lambda_i (x-K_i)^+ \right)$
which matches the call prices \eqref{CallConstraints}.
They then proceed to find the parameters $\lambda_0$, ..., $\lambda_n$
($\mu$ is a normalization constant which is easily expressed as a function of $\lambda_0$, ..., $\lambda_n$)
through a well posed, but sometimes ill conditioned, multi-dimensional root-finding problem.

We propose a different approach where rather than working on the space of parameters $\lambda_0$, ..., $\lambda_n$
we work on the space of digital prices implied by the densities.
As we shall see, the root-finding problems we encounter are easier and more stable than the original one that Buchen and Kelly studied.

Notice that the Buchen-Kelly density is continuous and piecewise-exponential. More precisely, on each interval $[K_i, K_{i+1}[$, $g$ has the form $g(x) = \alpha_ie^{\beta_i x}$ for some constants $\alpha_i > 0$ and $\beta_i \in \mathbb{R}$.
(See Section \ref{sec:BKNotation}.)

\subsection{Implied Digital Prices}

Under risk neutral pricing, the undiscounted price of a digital option with strike $K>0$ implied by a strictly positive density $g$ over $[0, \infty[$ is given by
\begin{equation}
\label{IntegralForDigital}
\ExpectMeas{g}{ \mathbf{I}_{\{S(T) \ge K\}} } = \int_K^\infty g(x) dx.
\end{equation}

It is also well known that $\ExpectMeas{g}{(S(T) - K)^+}$ is strictly convex on $K\in[0,\infty[$. Furthermore, from Lebesgue's dominated convergence, implied call and digital prices are related by
\begin{equation*}
\ExpectMeas{g}{ \mathbf{I}_{\{S(T) \ge K\}} } = -\frac{d}{dK} \ExpectMeas{g}{(S(T) - K)^+}.
\end{equation*}

Therefore, provided that $g$ matches market call prices \eqref{CallConstraints}, the implied digital prices are bounded by call spread prices:
\begin{equation}
-\frac{\tilde{C}_{i} - \tilde{C}_{i-1}}{K_i - K_{i-1}} >
\ExpectMeas{g}{ \mathbf{I}_{\{S(T) \ge K_i\}} } >
-\frac{\tilde{C}_{i+1} - \tilde{C}_i}{K_{i+1} - K_i}, \qquad \forall i = 1, ..., n.
\label{DigitalBounds}
\end{equation}
(Here, the rightmost quantity for $i=n$ must be read as zero.)

In particular, the digital prices implied by the Buchen-Kelly density are also bounded by these call spread prices.
Since the Buchen-Kelly density has piecewise-exponential form, the integral in \eqref{IntegralForDigital} is straightforward to calculate analytically
\cite{NeriSchneider2009}.

\section{The MED Obtained from Call and Digital Prices}

If in addition to call prices, for the same maturity and set of strikes, the prices $D(K_i)$ of European digitals are observed in the market (e.g. on the S\&P 500 index from the CBOE, where they are called {\it binary} options \cite{CBOE2010}), then we set
$\tilde{D}_i := D(K_i) / DF(T)$, for $i=1, ...,n$, $\tilde{D}_0 := 1$ and $\tilde{D}_{n+1} := 0$.

In this case, from \eqref{DigitalBounds}, we must have
\begin{equation}
-\frac{\tilde{C}_{i} - \tilde{C}_{i-1}}{K_i - K_{i-1}} > \tilde{D}_i > -\frac{\tilde{C}_{i+1} - \tilde{C}_i}{K_{i+1} - K_i}, \qquad \forall i = 1, ..., n,
\label{NoArbitrageDigitals}
\end{equation}

In \cite{NeriSchneider2009} we have shown that under \eqref{NoArbitrageDigitals} there is a unique density $g$ on $[0,\infty[$ which maximises entropy and matches both call prices \eqref{CallConstraints} and digital prices:
\begin{equation}
\ExpectMeas{g}{ \mathbf{I}_{\{S(T) \ge K_i\}} } = \int_{K_i}^\infty g(x) dx = \tilde{D}_i, \quad \forall i = 0, ..., n.
\label{DigitalConstraints}
\end{equation}

In general, $g$ is not continuous but, like the Buchen-Kelly density, it is still piecewise-exponential.

Keeping the call prices fixed, we now let the digital prices $\tilde{D}_1$, ..., $\tilde{D}_n$ vary inside the boundaries given in \eqref{NoArbitrageDigitals}.
We introduce the set $\Omega\subset\mathbb{R}^n$ of all $\tilde{D}=(\tilde{D}_1, ..., \tilde{D}_n)\in\mathbb{R}^n$ verifying \eqref{NoArbitrageDigitals}.
Note that $\Omega$ is an open $n$-dimensional rectangle.

For each $\tilde{D}\in\Omega$ we denote by $g_{\tilde{D}}$ the MED obtained in \cite{NeriSchneider2009}.
(See also Section \ref{sec:Recall} here for a review of some of the main results of \cite{NeriSchneider2009}.)
Let $\mathcal{G}=\{g_{\tilde{D}} \; | \; \tilde{D}\in\Omega\}$ be the family of all MEDs obtained in this way.

The Buchen-Kelly density has the greatest entropy among all densities over $[0,\infty[$ matching call prices \eqref{CallConstraints}, and its implied digital prices verify \eqref{DigitalBounds}. In other words, the Buchen-Kelly is the element of $\mathcal{G}$ with the greatest entropy.

In a typical market, where call prices are not available at a large number of strikes, the upper and lower boundaries for digital prices given in \eqref{DigitalBounds} can be quite far apart, that is, both $\Omega$ and $\mathcal{G}$ can be quite wide.

However, quoting \cite{BuchenKelly1996}, ``as is often the case in practice, we may have many options trading close-to-the-money.
Hence, there could be several closely spaced options, with strike differences small compared to the current asset price''.
In this case, Buchen and Kelly claim that their multi-dimensional root-finding problem may be poorly conditioned and become unstable.

\subsection{How the MED is Obtained from Call and Digital Prices}
\label{sec:Recall}

We shall recall briefly how each $g_{\tilde{D}}\in\mathcal{G}$ is obtained from call and digital prices.
Firstly, we introduce functions $c_0, ..., c_n$ which will appear repeatedly in the following:
\begin{equation}
c_i(\beta) := \left\{
\begin{array}{cl}
\displaystyle \ln \left( \frac{e^{\beta K_{i+1}} - e^{\beta K_i}}{\beta} \right) \quad & \text{for } i < n \text{ and } \beta \neq 0, \\
\\
\displaystyle \ln(K_{i+1} - K_i) & \text{for } i < n \text{ and } \beta = 0,\\
\\
\displaystyle \ln\left(-\frac{e^{\beta K_i}}{\beta}\right) & \text{for } i = n \text{ and } \beta < 0.
\end{array}
\right.
\end{equation}
Their first and second derivatives are given by
\begin{equation*}
c_i'(\beta) = \left\{
\begin{array}{cl}
\displaystyle \frac{K_{i+1} e^{\beta K_{i+1}} - K_i e^{\beta K_i}}{e^{\beta K_{i+1}} - e^{\beta K_i}} - \frac{1}{\beta} \quad & \text{for } i < n \text{ and } \beta \neq 0, \\
\\
\displaystyle \frac{K_{i+1} + K_i}{2} & \text{for } i < n \text{ and } \beta = 0, \\
\\
\displaystyle K_i - \frac{1}{\beta} \quad & \text{for } i = n \text{ and } \beta < 0.
\end{array}
\right.
\end{equation*}
and
\begin{equation*}
c_i''(\beta) = \left\{
\begin{array}{cl}
- (K_{i+1} - K_i)^2
  \dfrac{e^{\beta(K_{i+1} + K_i)}}{(e^{\beta K_{i+1}} - e^{\beta K_i})^2}
+ \dfrac1{\beta^2}
\quad & \text{for } i < n \text{ and } \beta \neq 0, \\
\\
\dfrac{(K_{i+1} - K_i)^2}{12} & \text{for } i < n \text{ and } \beta = 0, \\
\\
\dfrac1{\beta^2} \quad & \text{for } i = n \text{ and } \beta < 0.
\end{array}
\right.
\end{equation*}

Next, we recall the definition and some properties of the Legendre transform of a convex function.
Let $f$ be a convex function on $\mathbb{R}$. Define the {\it Legendre transform} of $f$ as
\begin{equation}
\label{DefinitionLegendreTransform}
f^*(y) := \sup_{x \in \mathbb{R}} \{ xy - f(x) \}, \quad y \in \mathbb{R}.
\end{equation}
We will frequently use the following properties of differentiable convex functions, which we do not prove here.
(See \cite{Arnold1989}, \cite{Ellis1985}, \cite{Rockafellar1970} for more details about the Legendre transform.)
\begin{proposition}
\label{PropertiesLegendreTransform}
Let $f$ be a differentiable and strictly convex function on $\mathbb{R}$.
Then the following conclusions hold.
\begin{enumerate}
\item\label{Young}
$xy \le f(x) + f^*(y)$ \text{with equality holding if, and only if,} $y = f'(x)$.
\item\label{Derivatives}
$[f^*]' = [f']^{-1}$.
\item\label{Twice}
$f^*$ is convex and $[f^*]^* = f$.
\end{enumerate}
\end{proposition}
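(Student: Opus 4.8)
The plan is to prove all three claims essentially straight from the definition \eqref{DefinitionLegendreTransform}, exploiting that for differentiable strictly convex $f$ the inner function $t\mapsto ty-f(t)$ is strictly concave with at most one critical point. First I would establish the generalized Young inequality (\ref{Young}). The inequality $xy\le f(x)+f^*(y)$ is immediate, since by definition $f^*(y)=\sup_t\{ty-f(t)\}\ge xy-f(x)$. For the equality case, fix $y$ and consider $\varphi(t):=ty-f(t)$, which is differentiable with $\varphi'(t)=y-f'(t)$. Because $f$ is strictly convex, $f'$ is strictly increasing, so $\varphi$ is strictly concave and has at most one stationary point, which, when it exists, is its global maximizer. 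Hence the supremum defining $f^*(y)$ is attained at $x$ if and only if $\varphi'(x)=0$, i.e. $y=f'(x)$, which is exactly the condition for equality.

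Next I would handle (\ref{Twice}). Convexity of $f^*$ needs no computation: for each fixed $t$ the map $y\mapsto ty-f(t)$ is affine, and a pointwise supremum of affine functions is convex. For the biconjugation $[f^*]^*=f$, Young's inequality rearranges to $f(x)\ge xy-f^*(y)$ for every $y$, hence $f(x)\ge\sup_y\{xy-f^*(y)\}=[f^*]^*(x)$. For the reverse inequality I would invoke the equality case just proved: choosing $y=f'(x)$ turns Young into an equality, so $xy-f^*(y)=f(x)$ and therefore $[f^*]^*(x)\ge f(x)$. Combining the two bounds yields $[f^*]^*=f$.

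Finally, for the derivative formula (\ref{Derivatives}), strict convexity makes $f'$ a strictly increasing, hence injective, map, so for $y$ in the range of $f'$ the maximizer in the definition of $f^*(y)$ is the unique point $x(y)=[f']^{-1}(y)$. Because this maximizer is unique, $f^*$ is differentiable there, and the envelope theorem gives $[f^*]'(y)=\partial_y(ty-f(t))\big|_{t=x(y)}=x(y)=[f']^{-1}(y)$; equivalently, differentiating $f^*(y)=y\,x(y)-f(x(y))$ directly and using the first-order condition $f'(x(y))=y$ cancels the terms carrying $x'(y)$.

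The main obstacle, and the only place where genuine care is needed, is the bookkeeping of domains. The transform $f^*$ is finite, and its defining supremum is attained, only for $y$ in the range of $f'$, so the differentiation used in (\ref{Derivatives}) and the substitution $y=f'(x)$ used in (\ref{Twice}) must be confined to that range — the interior of the effective domain of $f^*$. In particular one must first secure the differentiability of $f^*$ (which follows from uniqueness of the maximizer, without assuming $f$ twice differentiable) before the envelope argument can be applied rigorously.
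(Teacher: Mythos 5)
The paper does not prove this proposition at all: it explicitly states that these are standard properties of the Legendre transform ``which we do not prove here'' and refers the reader to Arnold, Ellis and Rockafellar. So there is no in-paper argument to compare yours against; your proposal has to stand on its own, and it does. The Young inequality and its equality case via the strictly concave inner function $t\mapsto ty-f(t)$, convexity of $f^*$ as a supremum of affine functions, and the biconjugation argument obtained by combining the two directions of Young's inequality are all correct and are exactly the standard route. Your handling of \ref{Derivatives} is also sound, and you rightly avoid assuming $f$ twice differentiable by invoking uniqueness of the maximizer rather than differentiating $x(y)$ directly; the one step you state without justification is that uniqueness of the maximizer implies differentiability of $f^*$, which rests on the convex-analysis fact that the argmax set coincides with the subdifferential $\partial f^*(y)$ (so a singleton argmax forces a singleton subdifferential, hence differentiability at interior points of the effective domain). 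Citing that fact, or the Fenchel equality characterization $x\in\partial f^*(y)\Leftrightarrow f(x)+f^*(y)=xy$, would close the only small gap. Your remark on restricting \ref{Derivatives} to the range of $f'$ is apt: as stated in the paper the identity $[f^*]'=[f']^{-1}$ is implicitly confined to that range, which is all that is needed in the application to the functions $c_i$.
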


\begin{figure}[ht]
\centering
\subfigure[Graph of $c_i$.] {
\includegraphics[width=.45\textwidth]{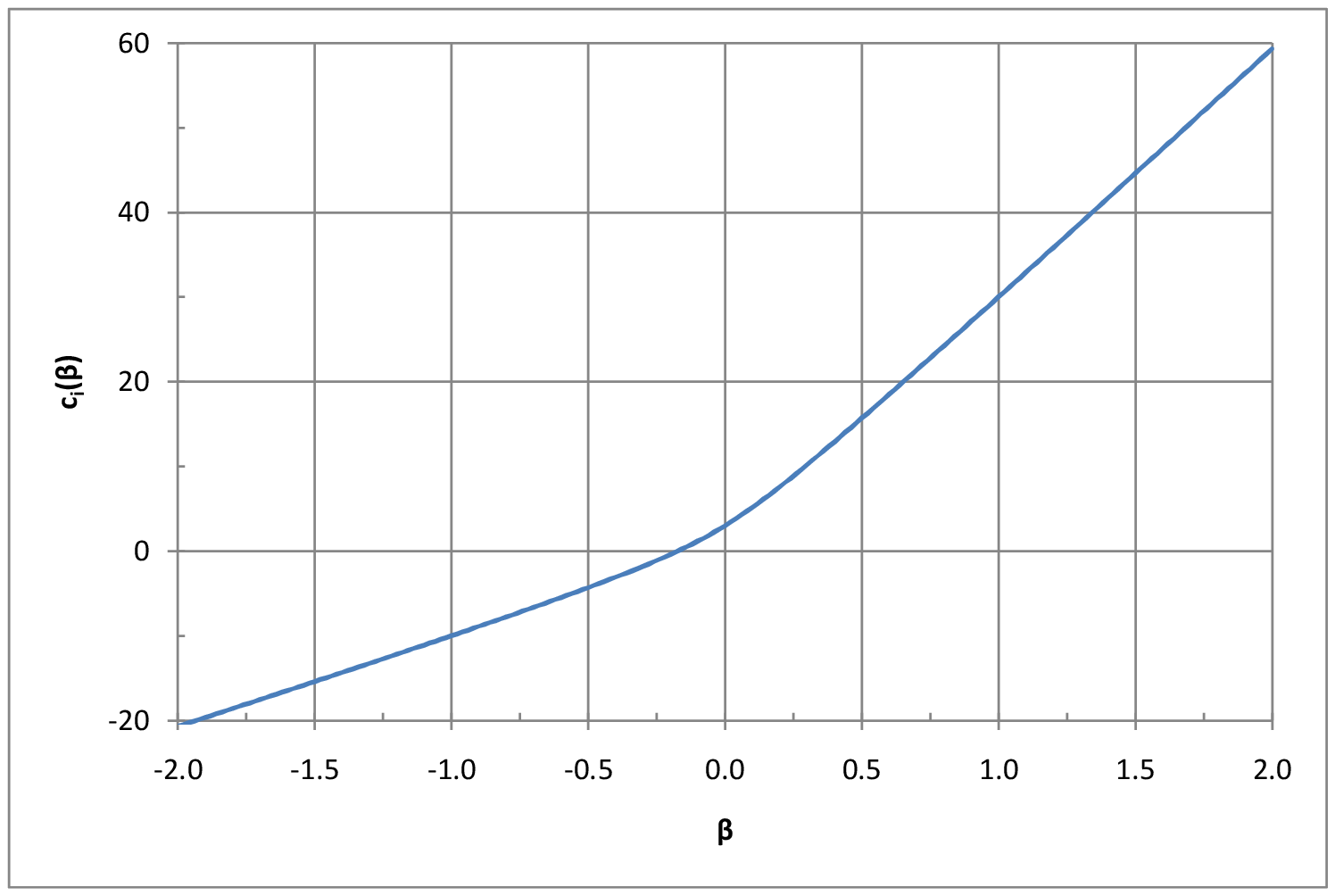}
\label{Fig:1a}
}
\subfigure[Graph of $c_i'$.] {
\includegraphics[width=.45\textwidth]{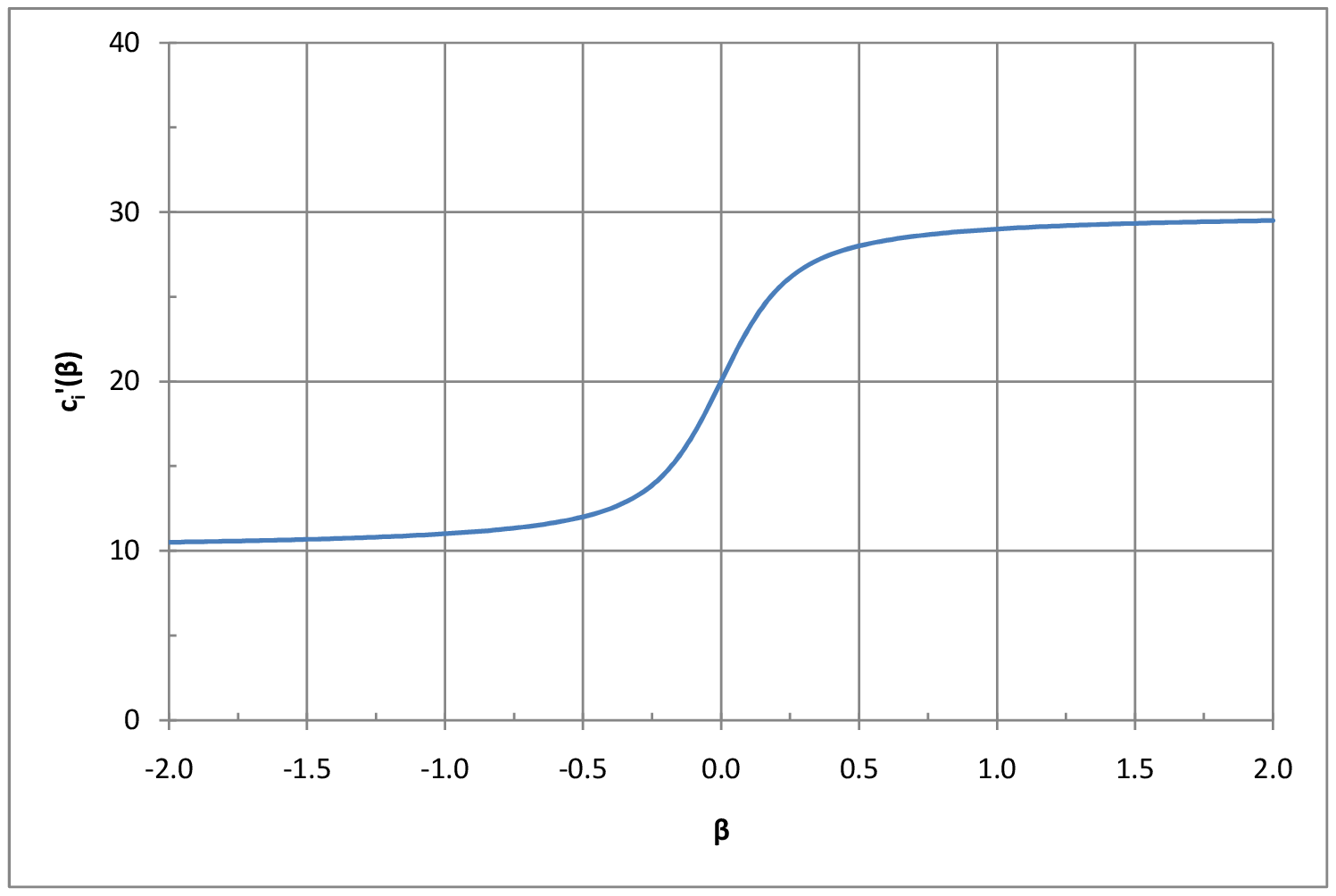}
\label{Fig:1b}
}
\subfigure[Graph of ${[}c_i^*{]}'$.] {
\includegraphics[width=.45\textwidth]{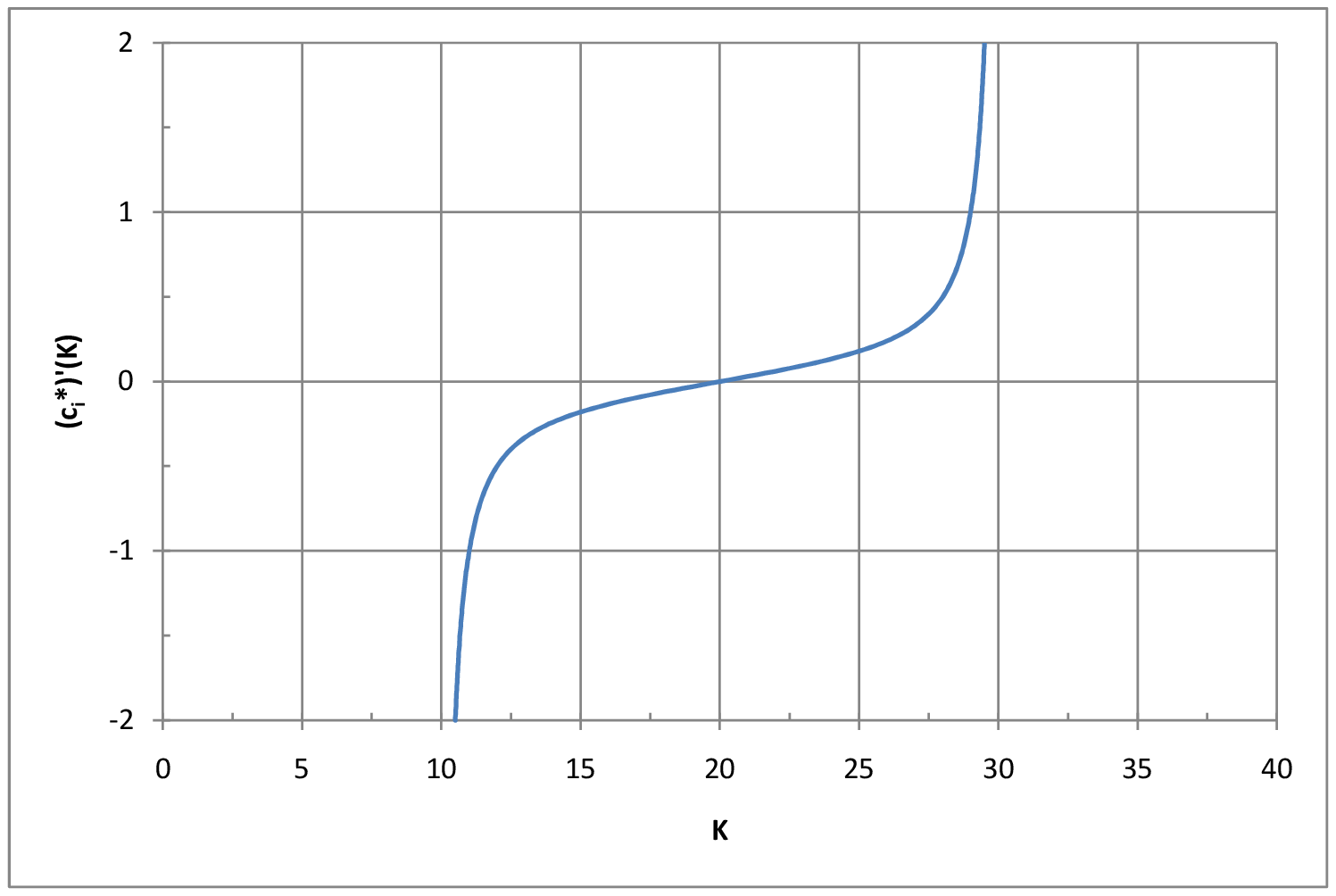}
\label{Fig:1c}
}
\subfigure[Graph of $c_i^*$.] {
\includegraphics[width=.45\textwidth]{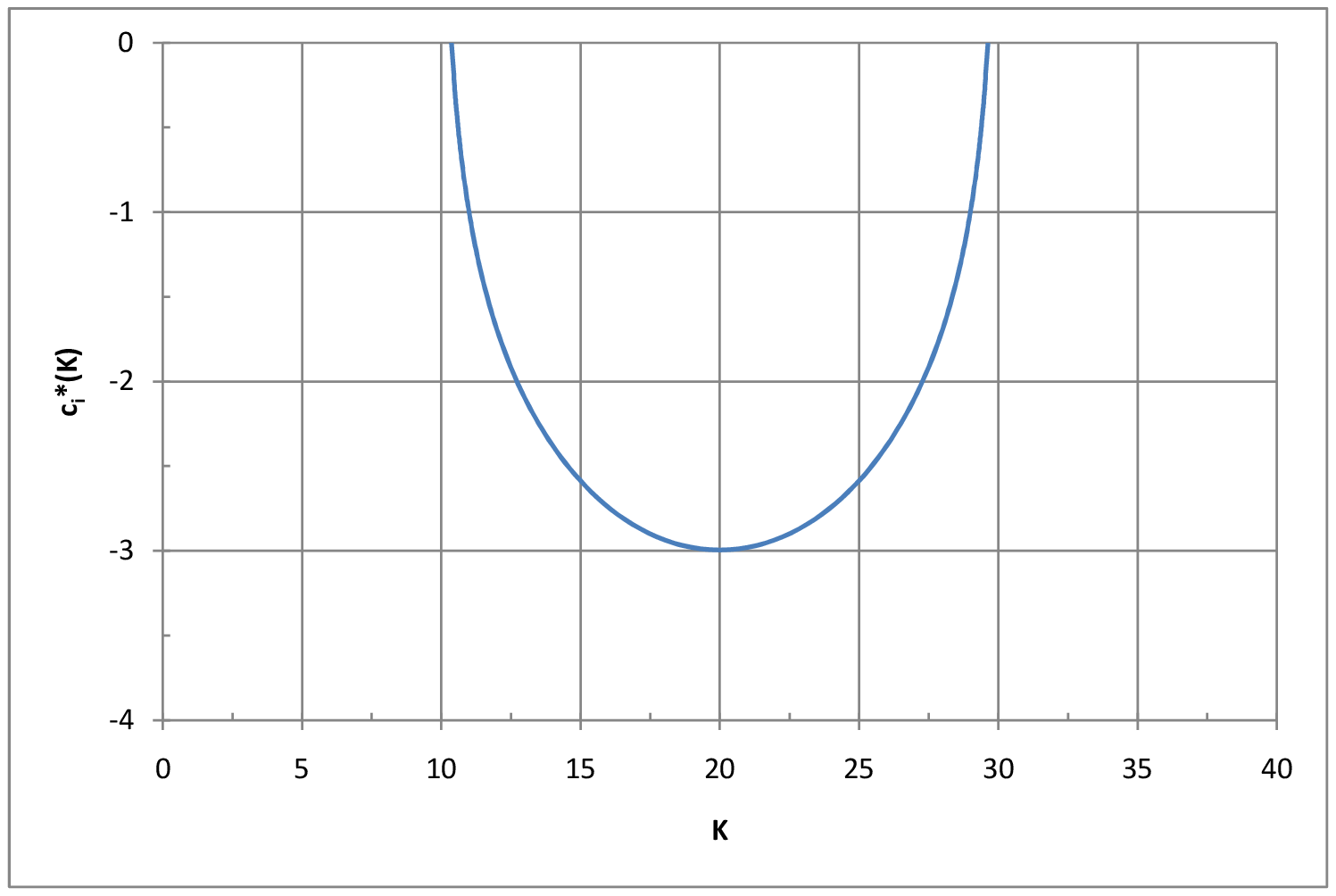}
\label{Fig:1d}
}
\caption{Graphs of $c_i$, $c_i'$, $[c_i^*]'$ and $c_i^*$ when $K_i = 10$ and $K_{i+1} = 30$.}
\label{Fig:1}
\end{figure}

The functions $c_1, ..., c_n$ are differentiable and strictly convex, and we describe their Legendre transforms.
Figure \ref{Fig:1} shows the graphs of $c_i$, $c_i'$, $[c_i^*]'$ and $c_i^*$ when $K_i = 10$ and $K_{i+1} = 30$.
In \cite{NeriSchneider2009} we have shown that $c_i'$ is continuously differentiable and strictly increasing,
and that $c_i'(\beta)$ goes to $K_i$ and $K_{i+1}$ when $\beta$ goes to $+\infty$ and $-\infty$, respectively.
For $i < n$, we have $c_i'(0) = (K_i + K_{i+1})/2$, and from Proposition \ref{PropertiesLegendreTransform} \ref{Derivatives}--\ref{Twice} it follows that
$(K_i + K_{i+1})/2$ is a root of $[c_i^*]'$ and a minimum of $c_i^*$.

From Proposition \ref{PropertiesLegendreTransform} \ref{Young}--\ref{Derivatives}, we have the equivalences
\begin{equation}
\label{CPlusCStar}
[c_i^*]'(K) = \beta \quad\Longleftrightarrow\quad
K = c_i'(\beta) \quad\Longleftrightarrow\quad
c_i(\beta) + c_i^*(K) = \beta K.
\end{equation}

Finally, recall from \cite{NeriSchneider2009} the following definitions and results.
Under no-arbitrage conditions \eqref{NoArbitrageDigitals}, that is, for all $\tilde{D}\in\Omega$, there is a unique density $g_{\tilde{D}}$ which maximises entropy and matches call \eqref{CallConstraints} and digital \eqref{DigitalConstraints} prices.
Moreover, for all $i=0, ..., n$ we have:
\begin{itemize}
\item The probability of $S(T)$ being inside a bucket $[K_i, K_{i+1}[$ is
\begin{equation}
\label{p_i}
p_i := {\mathbb P}^{g_{\tilde{D}}}[K_i \leq S(T) < K_{i+1}] =
\int_{K_i}^{K_{i+1}} g_{\tilde{D}}(x) dx = \tilde{D}_i - \tilde{D}_{i+1}.
\end{equation}
\item The conditional expectation of $S(T)$ given that it is inside the bucket is
\begin{equation}
\label{K_i_bar}
\bar{K}_i := {\mathbb E}^{g_{\tilde{D}}}[S(T) | K_i \leq S(T) < K_{i+1}] =
\frac{1}{p_i} \int_{K_i}^{K_{i+1}} x g_{\tilde{D}}(x) dx =
\frac{(\tilde{C}_i + K_i \tilde{D}_i) - (\tilde{C}_{i+1} + K_{i+1} \tilde{D}_{i+1})}{p_i}.
\end{equation}
\end{itemize}

Furthermore, on $[K_i, K_{i+1}[$ the density $g_{\tilde{D}}$ has the form $g_{\tilde{D}}(x) = \alpha_i e^{\beta_i x}$ where
\begin{align}
\alpha_i &= p_i e^{-c_i(\beta_i)},
\label{alpha_i}
\\
c_i'(\beta_i) &= \bar{K}_i.
\label{beta_i}
\end{align}

For notational convenience again, in \eqref{K_i_bar}, we make the convention that $K_{n+1}\tilde{D}_{n+1}:=0$.

Remark that in order for \eqref{p_i} and \eqref{K_i_bar} to hold, it is necessary that $p_i>0$ and $K_i<\bar{K}_i<K_{i+1}$ for all $i=0, ..., n$. These conditions are equivalent to $\eqref{NoArbitrageDigitals}$.

When computing $\alpha_i$ and $\beta_i$, equation \eqref{beta_i} is the only one that cannot be solved explicitly, since we do not know the expression of $[c_i']^{-1}=[c_i^*]'$.
However, in practice, it can be solved very quickly with a one-dimensional Newton-Raphson rootfinder, since we have the derivative $c_i''$ analytically.
Typically, it takes about three steps to obtain a very precise result, since the two functions are very well behaved.

\section{Maximum Entropy in Terms of Market Data}

Using the Legendre transform, we now show that for any $\tilde{D}\in\Omega$ the entropy of  $g_{\tilde{D}}\in\mathcal{G}$ can be expressed directly in terms of market data with no need to find the piecewise-exponential form of $g_{\tilde{D}}$.

\begin{theorem}
\label{H(D)}
For all $\tilde{D}\in\Omega$ the entropy of $g_{\tilde{D}}$ can be expressed as
\begin{equation*}
H(g_{\tilde{D}}) = -\sum_{i=0}^n p_i \ln p_i - \sum_{i=0}^n p_i c_i^*(\bar{K}_i),
\end{equation*}
where $p_i$ and $\bar{K}_i$ are given purely in terms of option prices by \eqref{p_i} and \eqref{K_i_bar}, for all $i=0, ..., n$.
\end{theorem}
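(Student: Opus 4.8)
The plan is to compute the entropy directly from its definition $H(g_{\tilde{D}}) = -\int_0^\infty g_{\tilde{D}}(x)\ln g_{\tilde{D}}(x)\,dx$, exploiting the piecewise-exponential structure of $g_{\tilde{D}}$ recalled in \eqref{alpha_i}--\eqref{beta_i}. First I would break the integral into the buckets $[K_i,K_{i+1}[$ for $i=0,\dots,n$, writing
\begin{equation*}
H(g_{\tilde{D}}) = -\sum_{i=0}^n \int_{K_i}^{K_{i+1}} g_{\tilde{D}}(x)\ln g_{\tilde{D}}(x)\,dx.
\end{equation*}
On each bucket $g_{\tilde{D}}(x)=\alpha_i e^{\beta_i x}$, so $\ln g_{\tilde{D}}(x)=\ln\alpha_i+\beta_i x$ and the integrand splits into a constant part and a part linear in $x$.

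The second step is to recognise the two elementary moments produced by this split. By \eqref{p_i} the zeroth moment over the bucket equals $p_i$, and by \eqref{K_i_bar} the first moment equals $p_i\bar{K}_i$. Hence the $i$-th bucket contributes $-p_i\ln\alpha_i-\beta_i p_i\bar{K}_i=-p_i(\ln\alpha_i+\beta_i\bar{K}_i)$. Substituting $\ln\alpha_i=\ln p_i-c_i(\beta_i)$, which follows from \eqref{alpha_i}, turns this into
\begin{equation*}
-p_i\ln p_i - p_i\bigl(\beta_i\bar{K}_i - c_i(\beta_i)\bigr).
\end{equation*}

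The key step --- and the only one carrying any real content --- is to identify the bracketed term with the Legendre transform. Since \eqref{beta_i} gives $c_i'(\beta_i)=\bar{K}_i$, the chain of equivalences \eqref{CPlusCStar} yields $c_i(\beta_i)+c_i^*(\bar{K}_i)=\beta_i\bar{K}_i$, that is, $\beta_i\bar{K}_i-c_i(\beta_i)=c_i^*(\bar{K}_i)$. The $i$-th contribution thus becomes $-p_i\ln p_i-p_i c_i^*(\bar{K}_i)$, and summing over $i$ delivers the claimed formula. I do not anticipate a genuine obstacle here: the argument is a direct computation whose entire purpose is that Legendre duality converts the primal parameters $(\alpha_i,\beta_i)$ back into market data. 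The only points deserving a line of care are the semi-infinite last bucket $i=n$, where integrability forces $\beta_n<0$ and $c_n$ is defined on its own branch, and the degenerate cases $\beta_i=0$; in all of these the relation \eqref{CPlusCStar} holds verbatim, so the single computation covers every case uniformly.
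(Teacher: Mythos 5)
Your proposal is correct and follows essentially the same route as the paper's proof: split the entropy integral over the buckets, use the piecewise-exponential form to reduce each bucket's contribution to $-p_i(\ln\alpha_i+\beta_i\bar{K}_i)$, substitute $\ln\alpha_i=\ln p_i-c_i(\beta_i)$ from \eqref{alpha_i}, and invoke \eqref{beta_i} together with the Legendre identity \eqref{CPlusCStar} to obtain $-p_i\ln p_i-p_ic_i^*(\bar{K}_i)$. Your extra remark about the semi-infinite bucket $i=n$ and the $\beta_i=0$ case is a sensible point of care that the paper leaves implicit.
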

\begin{proof}
Let $\tilde{D}\in\Omega$.
For each $i=0, ..., n$, let $p_i$, $\bar{K}_i$, $\alpha_i$ and $\beta_i$ be given by equations \eqref{p_i} and \eqref{K_i_bar}, \eqref{alpha_i} and \eqref{beta_i}, respectively.

Fix $i\in\{1, ..., n\}$ and let $H_i(g_{\tilde{D}})$ denote the entropy of $g_{\tilde{D}}$ over the $i$-th bucket, that is,
\begin{equation*}
H_i(g_{\tilde{D}}) := -\int_{K_i}^{K_{i+1}} g_{\tilde{D}}(x) \ln g_{\tilde{D}}(x) dx =
-\ln \alpha_i \int_{K_i}^{K_{i+1}} g_{\tilde{D}}(x) dx - \beta_i \int_{K_i}^{K_{i+1}} x g_{\tilde{D}}(x) dx.
\end{equation*}
From \eqref{p_i} and \eqref{K_i_bar} it follows that
\begin{equation*}
H_i(g_{\tilde{D}}) = -p_i(\ln\alpha_i + \beta_i\bar{K}_i).
\end{equation*}
Using \eqref{alpha_i},  we get
\begin{equation*}
H_i(g_{\tilde{D}}) = -p_i(\ln p_i - c_i(\beta_i) + \beta_i\bar{K}_i).
\end{equation*}
Finally, \eqref{beta_i} and \eqref{CPlusCStar} yield
\begin{equation*}
H_i(g_{\tilde{D}}) = -p_i \ln p_i - p_i c_i^*(\bar{K}_i).
\end{equation*}
Adding up over $i=0, ..., n$ yields
\begin{equation*}
H(g_{\tilde{D}}) = \sum_{i=0}^n H_i(g_{\tilde{D}}) = -\sum_{i=0}^n p_i \ln p_i - \sum_{i=0}^n p_i c_i^*(\bar{K}_i).
\end{equation*}

\vspace{-2\baselineskip}
\qed
\end{proof}

The expression of $H(g_{\tilde{D}})$ given in Theorem \ref{H(D)} can be split into discrete and continuous parts:
\begin{alignat}{3}
H^d(g_{\tilde{D}}) &:= \sum_{i=0}^n H^d_i(g_{\tilde{D}}), & \quad & \text{where } H^d_i(g_{\tilde{D}}) := -p_i \ln p_i, & \quad & \forall i = 0, ..., n,
\label{H_discrete}
\\
H^c(g_{\tilde{D}}) &:= \sum_{i=0}^n H^c_i(g_{\tilde{D}}), & \quad & \text{where } H^c_i(g_{\tilde{D}}) := -p_i c_i^*(\bar{K}_i), & \quad & \forall i = 0, ..., n.
\label{H_continuous}
\end{alignat}
The term $H^d(g_{\tilde{D}})$ can be seen as a Shannon type discrete entropy, whose maximum is attained when all $p_i$ are equal, {\em i.e.} when all buckets have equal probability, whereas each $H^c_i(g_{\tilde{D}})$, for $i < n$, attains its maximum when $\bar{K}_i$ lies exactly in the middle of $K_i$ and $K_{i+1}$, {\em i.e.} $\bar{K}_i = (K_i + K_{i+1})/2$.

\subsection{The Sensitivity of Maximum Entropy with Respect to Digital Prices}

With fixed strikes and call prices, under the light of Theorem \ref{H(D)}, $H$, $H^d$ and $H^c$ can be seen as functions of digital prices only.
Now we show how $H$ is affected by digital price changes.

\begin{theorem}
\label{H'(D)}
As a function of digital prices, $H:\Omega\rightarrow\mathbb{R}$ is differentiable and, for all $\tilde{D}\in\Omega$, we have
\begin{equation*}
\frac{\partial H}{\partial \tilde{D}_i}(\tilde{D}) = \ln g_{\tilde{D}}(K_i-) - \ln g_{\tilde{D}}(K_i+), \quad \forall i = 1, ..., n,
\end{equation*}
where
\begin{equation*}
g_{\tilde{D}}(K_i-) := \lim_{x\rightarrow K_i^-} g_{\tilde{D}}(x) = \alpha_{i-1} e^{\beta_{i-1} K_i}
\quad \text{and} \quad
g_{\tilde{D}}(K_i+) := \lim_{x\rightarrow K_i^+} g_{\tilde{D}}(x) = \alpha_i e^{\beta_i K_i},
\end{equation*}
with $\alpha_i$ and $\beta_i$ given by \eqref{alpha_i} and \eqref{beta_i}, respectively, for all $i=0, ..., n$.
\end{theorem}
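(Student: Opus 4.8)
The plan is to differentiate the closed-form expression from Theorem~\ref{H(D)} directly. Writing $H = H^d + H^c$ as in \eqref{H_discrete}--\eqref{H_continuous}, the only quantities carrying the dependence on $\tilde D$ are the bucket probabilities $p_j = \tilde D_j - \tilde D_{j+1}$ from \eqref{p_i} and the numerators $N_j := p_j \bar K_j = (\tilde C_j + K_j \tilde D_j) - (\tilde C_{j+1} + K_{j+1}\tilde D_{j+1})$ appearing in \eqref{K_i_bar}; both are affine in $\tilde D$. Since on the open set $\Omega$ one has $p_j > 0$ and $K_j < \bar K_j < K_{j+1}$ strictly, the conditional mean $\bar K_j = N_j/p_j$ is smooth and stays in the interior of the domain of $c_j^*$, where $c_j^*$ is continuously differentiable (its derivative being $[c_j']^{-1}$ by Proposition~\ref{PropertiesLegendreTransform} \ref{Derivatives}). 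Composing, $H$ is $C^1$ on $\Omega$, which already yields the asserted differentiability; the rest is a chain-rule computation.

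First I would record the elementary partials. A direct inspection of $p_j$ and $N_j$ shows that $\tilde D_i$ enters only the buckets $j = i-1$ and $j = i$, with $\partial p_i/\partial\tilde D_i = +1$, $\partial p_{i-1}/\partial\tilde D_i = -1$ and $\partial N_i/\partial\tilde D_i = +K_i$, $\partial N_{i-1}/\partial\tilde D_i = -K_i$, all remaining partials vanishing. Thus the sum over buckets collapses to the two terms $j = i-1, i$, and the whole problem reduces to differentiating a single bucket's contribution $E_j := -p_j\ln p_j - p_j c_j^*(\bar K_j)$ with respect to $p_j$ and $N_j$.

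The computational heart of the argument is the evaluation of the partials of $p_j\, c_j^*(N_j/p_j)$. Differentiating in $N_j$ produces the factor $[c_j^*]'(\bar K_j)$, which by Proposition~\ref{PropertiesLegendreTransform} \ref{Derivatives} together with \eqref{beta_i} is exactly $\beta_j$; differentiating in $p_j$ produces $c_j^*(\bar K_j) - \bar K_j\,[c_j^*]'(\bar K_j) = c_j^*(\bar K_j) - \beta_j \bar K_j$, which by the Young-type equality \eqref{CPlusCStar} equals $-c_j(\beta_j)$. This is where the Legendre machinery pays off: the implicit relation $c_j'(\beta_j) = \bar K_j$ is converted into explicit, closed-form derivatives, so that the partials of $E_j$ become $\partial E_j/\partial p_j = -\ln p_j - 1 + c_j(\beta_j)$ and $\partial E_j/\partial N_j = -\beta_j$.

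Finally I would assemble the pieces. Applying the chain rule with the elementary partials above, the contributions from buckets $i$ and $i-1$ add, the constant $\pm 1$ terms cancel, and one is left with $\partial H/\partial\tilde D_i = \bigl(\ln p_{i-1} - c_{i-1}(\beta_{i-1}) + \beta_{i-1}K_i\bigr) - \bigl(\ln p_i - c_i(\beta_i) + \beta_i K_i\bigr)$. Using \eqref{alpha_i} in the form $\ln\alpha_j = \ln p_j - c_j(\beta_j)$, each parenthesis is recognised as $\ln(\alpha_j e^{\beta_j K_i})$, that is, as $\ln g_{\tilde D}(K_i-)$ for $j=i-1$ and $\ln g_{\tilde D}(K_i+)$ for $j=i$, which is the claimed formula. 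I expect the only genuine obstacle to be bookkeeping: correctly tracking which buckets depend on $\tilde D_i$ and handling the ratio $\bar K_j = N_j/p_j$ cleanly --- taking $p_j$ and $N_j$, rather than $p_j$ and $\bar K_j$, as the intermediate variables is what keeps the chain rule from manifestly generating $1/p_j$ terms. The endpoint indices $i=1$ and $i=n$ require only observing that the conventions $\tilde D_0 = 1$, $K_0 = 0$ and $\tilde D_{n+1} = 0$, together with the $\beta<0$ branch of $c_n$, make the same formulas go through verbatim.
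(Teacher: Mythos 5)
Your proposal is correct and follows essentially the same route as the paper: differentiate the closed-form entropy of Theorem \ref{H(D)} bucket by bucket, use \eqref{beta_i} and the Legendre duality \eqref{CPlusCStar} to convert $[c_j^*]'(\bar K_j)$ into $\beta_j$ and $c_j^*(\bar K_j)-\beta_j\bar K_j$ into $-c_j(\beta_j)$, and recognise $\ln p_j - c_j(\beta_j) + \beta_j K_i$ as $\ln(\alpha_j e^{\beta_j K_i})$ via \eqref{alpha_i}. The only difference is your choice of the affine intermediate variables $(p_j, N_j)$ in place of the paper's $(p_j,\bar K_j)$, which is a minor but arguably cleaner piece of bookkeeping that avoids the slightly awkward partials $\partial\bar K_j/\partial p_j$ appearing in the paper's version.
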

\begin{proof}
Let $\tilde{D}\in\Omega$.
For each $i=0, ..., n$, let $p_i$, $\bar{K}_i$, $\alpha_i$ and $\beta_i$ be given by equations \eqref{p_i} and \eqref{K_i_bar}, \eqref{alpha_i} and \eqref{beta_i}, respectively.

Fix $i\in\{1, ..., n\}$.
Throughout the proof, we use the following consequences of \eqref{p_i} and \eqref{K_i_bar}:
\begin{equation*}
\dfrac{\partial p_i}{\partial\tilde{D}_i} = 1,
\quad
\dfrac{\partial p_{i-1}}{\partial\tilde{D}_i} = -1,
\quad
\dfrac{\partial\bar{K}_{i-1}}{\partial p_{i-1}} = \dfrac{K_{i-1} - \bar{K}_{i-1}}{p_{i-1}}
\quad\text{and}\quad
\dfrac{\partial\bar{K}_i}{\partial p_i} = \dfrac{K_i - \bar{K}_i}{p_i}.
\end{equation*}

Starting with the discrete part, from \eqref{H_discrete} we have
\begin{align}
\frac{\partial H^d}{\partial \tilde{D}_i}(\tilde{D})
&=
\frac{\partial}{\partial \tilde{D}_i}\Big[-p_{i-1} \ln p_{i-1} - p_i \ln p_i \Big]
=
\frac{\partial p_{i-1}}{\partial \tilde{D}_i}\Big(-\ln p_{i-1} - 1 \Big)
+ \frac{\partial p_i}{\partial \tilde{D}_i} \Big(-\ln p_i - 1 \Big)
\nonumber
\\
&=
\ln p_{i-1} - \ln p_i.
\label{dHddDi}
\end{align}

For the continuous part, from \eqref{H_continuous} we have
\begin{align*}
\frac{\partial H^c}{\partial\tilde{D}_i}(\tilde{D})
&= \frac{\partial}{\partial \tilde{D}_i} \Big[ -p_{i-1} c_{i-1}^*(\bar{K}_{i-1}) - p_i c_i^*(\bar{K}_i) \Big] \\
&= \frac{\partial p_{i-1}}{\partial\tilde{D}_i} \Big(-c_{i-1}^*(\bar{K}_{i-1}) -
p_{i-1} [c_{i-1}^*]'(\bar{K}_{i-1}) \frac{\partial\bar{K}_{i-1}}{\partial p_{i-1}} \Big)
+
\frac{\partial p_i}{\partial\tilde{D}_i} \Big(-c_{i}^*(\bar{K}_i) -
p_i [c_i^*]'(\bar{K}_i) \frac{\partial \bar{K}_i}{\partial p_i} \Big) \\
&= c_{i-1}^*(\bar{K}_{i-1}) +
p_{i-1} [c_{i-1}^*]'(\bar{K}_{i-1}) \left(\frac{\bar{K}_{i-1} - K_i}{p_{i-1}}\right)
- c_i^*(\bar{K}_i) - p_i [c_i^*]'(\bar{K}_i)\left(\frac{K_i - \bar{K}_i}{p_i}\right).
\end{align*}
Using \eqref{CPlusCStar} and \eqref{beta_i} yields $[c_{i-1}^*]'(\bar{K}_{i-1}) =
\beta_{i-1}$ and $[c_i^*]'(\bar{K}_i) = \beta_i$. Hence,
\begin{align*}
\frac{\partial H^c}{\partial \tilde{D}_i}(\tilde{D})
&=
c_{i-1}^*(\bar{K}_{i-1}) - \beta_{i-1} (\bar{K}_{i-1} - K_i) -
c_i^*(\bar{K}_i) - \beta_i (K_i - \bar{K}_i)
\\
&=
c_{i-1}^*(\bar{K}_{i-1}) - \beta_{i-1} \bar{K}_{i-1} + \beta_{i-1} K_i
- c_i^*(\bar{K}_i) + \beta_i \bar{K}_i - \beta_i K_i.
\end{align*}
Again, using \eqref{CPlusCStar} we get
\begin{equation}
\frac{\partial H^c}{\partial \tilde{D}_i}(\tilde{D})
=
- c_{i-1}(\beta_{i-1}) + \beta_{i-1} K_i + c_i(\beta_i) - \beta_i K_i.
\label{dHcdDi}
\end{equation}

Putting \eqref{dHddDi} and \eqref{dHcdDi} together and using \eqref{alpha_i} we obtain
\begin{align*}
\frac{\partial H}{\partial \tilde{D}_i}(\tilde{D}) &= \frac{\partial H^d}{\partial \tilde{D}_i}(\tilde{D}) + \frac{\partial H^c}{\partial \tilde{D}_i}(\tilde{D})
= \ln p_{i-1} - c_{i-1}(\beta_{i-1}) + \beta_{i-1} K_i - \ln p_i + c_i(\beta_i) - \beta_i K_i
\\
&=
\ln \alpha_{i-1} + \beta_{i-1} K_i - \ln \alpha_i - \beta_i K_i
= \ln (\alpha_{i-1} e^{\beta_{i-1} K_i}) - \ln (\alpha_i e^{\beta_i K_i})
\\
&=
\ln g_{\tilde{D}}(K_i-) - \ln g_{\tilde{D}}(K_i+).
\end{align*}

\vspace{-2\baselineskip}
\qed
\end{proof}

\begin{corollary}
\label{Critical=Continuous}
For any $\tilde{D}\in\Omega$, the density $g_{\tilde{D}}\in\mathcal{G}$ is continuous if, and only if, $\tilde{D}$ is a critical point of $H$, that is,
\label{Corollary1}
\begin{equation*}
\frac{\partial H}{\partial \tilde{D}_i}(\tilde{D}) = 0 \quad \forall i = 1, ..., n.
\end{equation*}
\end{corollary}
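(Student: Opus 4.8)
The plan is to read this Corollary as an immediate consequence of Theorem \ref{H'(D)}, reducing the continuity of $g_{\tilde{D}}$ to the vanishing of the one-sided jumps at the interior strikes. I would first recall that each $g_{\tilde{D}}\in\mathcal{G}$ has the piecewise-exponential form $g_{\tilde{D}}(x)=\alpha_i e^{\beta_i x}$ on $[K_i,K_{i+1}[$. Since every such exponential is continuous on the open bucket, the only points at which $g_{\tilde{D}}$ can possibly fail to be continuous are the interior strikes $K_1,\ldots,K_n$; the endpoints $K_0=0$ and $K_{n+1}=\infty$ play no role, being merely the boundary of the domain $[0,\infty[$. Hence $g_{\tilde{D}}$ is continuous on all of $[0,\infty[$ if and only if the left and right limits coincide at each interior strike, that is,
\[
g_{\tilde{D}}(K_i-)=g_{\tilde{D}}(K_i+),\qquad\forall i=1,\ldots,n.
\]

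Next I would exploit the strict positivity of $g_{\tilde{D}}$ to pass to logarithms. Both limits $g_{\tilde{D}}(K_i-)=\alpha_{i-1}e^{\beta_{i-1}K_i}$ and $g_{\tilde{D}}(K_i+)=\alpha_i e^{\beta_i K_i}$ are strictly positive, so $\ln$ is well-defined and strictly increasing on the relevant range; therefore the equality of the two limits is equivalent to the equality of their logarithms. By Theorem \ref{H'(D)}, for each $i$ we have $\frac{\partial H}{\partial\tilde{D}_i}(\tilde{D})=\ln g_{\tilde{D}}(K_i-)-\ln g_{\tilde{D}}(K_i+)$, so the two one-sided limits agree precisely when this partial derivative vanishes. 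Running the equivalence over all $i=1,\ldots,n$ yields that $g_{\tilde{D}}$ is continuous if and only if every partial derivative of $H$ vanishes, i.e. $\tilde{D}$ is a critical point of $H$.

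There is essentially no hard step here, since the substance is entirely carried by Theorem \ref{H'(D)}. The only two points requiring a moment's care are (i) restricting the continuity check to the interior strikes $K_1,\ldots,K_n$, noting that the exponential pieces are automatically continuous in the interior of each bucket, and (ii) invoking strict positivity of the density so that passing to logarithms is a genuine equivalence and not merely a one-directional implication.
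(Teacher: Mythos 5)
Your argument is correct and is exactly the paper's proof, which simply cites Theorem \ref{H'(D)} together with the piecewise-exponential form of $g_{\tilde{D}}$ on each bucket; you have merely spelled out the details (reduction of continuity to the interior strikes and the passage to logarithms via strict positivity). No gaps.
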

\begin{proof}
This is a direct consequence of Theorem \ref{H'(D)} and the fact that in the interior of each bucket $g_{\tilde{D}}$ is exponential.
\qed
\end{proof}

\section{The Buchen-Kelly Density is the only Continuous MED}

We know that the Buchen-Kelly density is continuous and has the greatest entropy of all densities in $\mathcal{G}$.
In agreement with Corollary \ref{Critical=Continuous}, digital prices implied by the Buchen-Kelly density correspond to a point of maximum entropy and, thus, a critical point of $H$ in $\Omega$.
Actually, the digital prices implied by the Buchen-Kelly density form the only critical point of $H$ in $\Omega$.
Indeed, $H$ is strictly concave as we shall see in Proposition \ref{Concavity}.

Now we state and prove a lemma which we need for our algorithm and for the proof of Proposition \ref{Concavity}.

\begin{lemma}
\label{H''(D)}
The function $H:\Omega\rightarrow\mathbb{R}$ is twice differentiable and its Hessian matrix at any $\tilde{D}\in\Omega$ is symmetric and tridiagonal with entries given by
\begin{align*}
\dfrac{\partial^2 H}{\partial \tilde{D}_i^2}(\tilde{D})
&=
-\dfrac{1}{p_{i-1}}
-\dfrac{1}{p_i}
-\dfrac{(K_i - \bar{K}_{i-1})^2}{p_{i-1} c_{i-1}''(\beta_{i-1})}
-\dfrac{(\bar{K}_i - K_i)^2}{p_i c_i''(\beta_i)}, \quad &
\forall i &= 1, ..., n, \\
\\
\dfrac{\partial^2 H}{\partial \tilde{D}_i \partial \tilde{D}_{i+1}}(\tilde{D}) &=
\dfrac{1}{p_i}
-\dfrac{(\bar{K}_i - K_i)(K_{i+1}-\bar{K}_i)}{p_i c_i''(\beta_i)}, \quad &
\forall i &= 1, ..., n - 1,
\end{align*}
where $p_i$, $\bar{K}_i$ and $\beta_i$ are given by \eqref{p_i}, \eqref{K_i_bar} and \eqref{beta_i}, respectively, for all $i = 0, ..., n$.
\end{lemma}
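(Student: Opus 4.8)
The plan is to differentiate the gradient of $H$ one further time. Rather than starting from the final $\ln g_{\tilde D}$ form in Theorem \ref{H'(D)}, I would start from the intermediate expression obtained in its proof, namely
\begin{equation*}
\frac{\partial H}{\partial \tilde D_i}(\tilde D) = \ln p_{i-1} - c_{i-1}(\beta_{i-1}) + \beta_{i-1}K_i - \ln p_i + c_i(\beta_i) - \beta_i K_i,
\end{equation*}
which is already written purely in terms of the bucket quantities $p_{i-1}, p_i, \beta_{i-1}, \beta_i$. Since $p_k = \tilde D_k - \tilde D_{k+1}$ depends only on $\tilde D_k, \tilde D_{k+1}$, and $\beta_k$ depends on $\tilde D$ only through $\bar K_k$ via \eqref{beta_i}, which by \eqref{K_i_bar} depends only on $\tilde D_k, \tilde D_{k+1}$, the right-hand side above depends only on $\tilde D_{i-1}, \tilde D_i, \tilde D_{i+1}$. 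This immediately gives $\partial^2 H/\partial \tilde D_i \partial \tilde D_j = 0$ whenever $|i-j|\ge 2$, establishing the tridiagonal structure, while symmetry will follow from Schwarz's theorem once the second partials are shown to be continuous.

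For the nonzero entries I would first collect the elementary derivatives. From $p_k = \tilde D_k - \tilde D_{k+1}$ one has $\partial p_k/\partial \tilde D_j = \delta_{kj} - \delta_{k+1,j}$, and from \eqref{K_i_bar} a short computation gives $\partial \bar K_k/\partial \tilde D_k = (K_k - \bar K_k)/p_k$ and $\partial \bar K_k/\partial \tilde D_{k+1} = (\bar K_k - K_{k+1})/p_k$ (and zero otherwise). The one genuinely new ingredient is $\partial \beta_k/\partial \tilde D_j$: differentiating the defining relation \eqref{beta_i}, equivalently $\beta_k = [c_k^*]'(\bar K_k)$, and using Proposition \ref{PropertiesLegendreTransform} \ref{Derivatives}--\ref{Twice} together with the inverse-function rule to get $[c_k^*]''(\bar K_k) = 1/c_k''(\beta_k)$, yields
\begin{equation*}
\frac{\partial \beta_k}{\partial \tilde D_j} = \frac{1}{c_k''(\beta_k)}\,\frac{\partial \bar K_k}{\partial \tilde D_j}.
\end{equation*}
This is the step I expect to be the crux: everything downstream is bookkeeping, but the passage from the implicitly defined $\beta_k$ to an explicit derivative rests on this Legendre relation.

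With these in hand the chain rule simplifies cleanly. The decisive cancellation comes from $c_k'(\beta_k) = \bar K_k$: for the $i$-th bucket, whose gradient contribution is $-\ln p_i + c_i(\beta_i) - \beta_i K_i$, one finds $\frac{\partial}{\partial \tilde D_j}[c_i(\beta_i) - \beta_i K_i] = (\bar K_i - K_i)\,\partial \beta_i/\partial \tilde D_j$, while for the $(i-1)$-th bucket, whose contribution is $\ln p_{i-1} - c_{i-1}(\beta_{i-1}) + \beta_{i-1}K_i$, one finds $\frac{\partial}{\partial \tilde D_j}[-c_{i-1}(\beta_{i-1}) + \beta_{i-1}K_i] = (K_i - \bar K_{i-1})\,\partial \beta_{i-1}/\partial \tilde D_j$. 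Substituting the elementary derivatives then gives the entries directly. For $j=i$, the bucket-$(i-1)$ terms produce $-1/p_{i-1} - (K_i - \bar K_{i-1})^2/(p_{i-1}c_{i-1}''(\beta_{i-1}))$ and the bucket-$i$ terms produce $-1/p_i - (\bar K_i - K_i)^2/(p_i c_i''(\beta_i))$, whose sum is the stated diagonal entry. For $j=i+1$, only the bucket-$i$ terms survive (bucket $i-1$ is independent of $\tilde D_{i+1}$), giving $1/p_i + (\bar K_i - K_i)(\bar K_i - K_{i+1})/(p_i c_i''(\beta_i)) = 1/p_i - (\bar K_i - K_i)(K_{i+1} - \bar K_i)/(p_i c_i''(\beta_i))$, the stated super-diagonal entry. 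Finally, since $\ln$, each $c_k$, and each $[c_k^*]'$ are continuously differentiable on the relevant ranges (with $c_k'' > 0$ on $\Omega$), every second partial is continuous, so $H$ is $C^2$ and the Hessian is symmetric, which completes the proof.
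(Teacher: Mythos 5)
Your proof is correct and follows essentially the same route as the paper: differentiate the intermediate gradient expression from the proof of Theorem \ref{H'(D)}, using $\partial\beta_k/\partial\tilde{D}_j = (1/c_k''(\beta_k))\,\partial\bar{K}_k/\partial\tilde{D}_j$ (the paper obtains this by implicitly differentiating $c_k'(\beta_k)=\bar{K}_k$, which is the same relation you derive via the Legendre transform). The only cosmetic difference is that the paper treats the discrete and continuous parts $H^d$ and $H^c$ separately, whereas you keep them together.
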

\begin{proof}
Let $\tilde{D}\in\Omega$.
For each $i=0, ..., n$, let $p_i$, $\bar{K}_i$ and $\beta_i$ be given by equations \eqref{p_i}, \eqref{K_i_bar} and \eqref{beta_i}, respectively.

The fact that $H''(\tilde{D})$ is tridiagonal and symmetric is clear from the expression of $H$'s partial derivatives given in Theorem \ref{H'(D)}.
Actually, the discrete part $[H^d]''(\tilde{D})$ and continuous part $[H^c]''(\tilde{D})$ of $H''(\tilde{D})$ are themselves symmetric and tridiagonal.

Fix $i\in\{1, ..., n\}$.
We begin by calculating the second derivatives of $H^d$ starting from \eqref{dHddDi}.
For the diagonal entry we have
\begin{equation}
\label{HessianDiagonalDiscrete}
\frac{\partial^2 H^d}{\partial \tilde{D}_i^2}(\tilde{D})
= \frac{\partial}{\partial \tilde{D}_i} \Big[\ln p_{i-1} - \ln p_i \Big]
= \frac{1}{p_{i-1}} \frac{\partial p_{i-1}}{\partial \tilde{D}_i}
- \frac{1}{p_i} \frac{\partial p_i}{\partial \tilde{D}_i}
= - \frac{1}{p_{i-1}} -\frac{1}{p_i},
\end{equation}
and, if $i<n$, for the off-diagonal entry we have
\begin{equation}
\label{HessianOffDiagonalDiscrete}
\frac{\partial^2 H^d}{\partial \tilde{D}_i \partial \tilde{D}_{i+1}}(\tilde{D})
= -\frac{\partial}{\partial \tilde{D}_{i+1}}\Big[\ln p_i\Big]
= -\frac{1}{p_i} \frac{\partial p_i}{\partial \tilde{D}_{i+1}} = \frac{1}{p_i}.
\end{equation}

Next, for the continuous part, starting from \eqref{dHcdDi} we get
\begin{align}
\frac{\partial^2 H^c}{\partial \tilde{D}_i^2}(\tilde{D}) &= \frac{\partial}{\partial\tilde{D}_i}
\Big[- c_{i-1}(\beta_{i-1}) + \beta_{i-1} K_i + c_i(\beta_i) - \beta_i K_i\Big]
\nonumber
\\
&=
- c_{i-1}'(\beta_{i-1}) \frac{\partial \beta_{i-1}}{\partial \tilde{D}_i}
+ \frac{\partial \beta_{i-1}}{\partial \tilde{D}_i} K_i
+ c_i'(\beta_i) \frac{\partial \beta_i}{\partial \tilde{D}_i}
- \frac{\partial \beta_i}{\partial \tilde{D}_i} K_i
\nonumber
\\
&=
(K_i - \bar{K}_{i-1}) \frac{\partial \beta_{i-1}}{\partial \tilde{D}_i}
+
(\bar{K}_i - K_i) \frac{\partial \beta_i}{\partial \tilde{D}_i}
\nonumber
\\
&=
\label{HessianDiagonalContinuous}
- \frac{(K_i - \bar{K}_{i-1})^2}{p_{i-1} c_{i-1}''(\beta_{i-1})}
- \frac{(\bar{K}_i - K_i)^2}{p_i c_i''(\beta_i)},
\end{align}
and, if $i<n$,
\begin{equation}
\label{HessianOffDiagonalContinuous}
\frac{\partial^2 H^c}{\partial \tilde{D}_i \partial \tilde{D}_{i+1}}(\tilde{D})
= \frac{\partial}{\partial \tilde{D}_{i+1}}\Big[ c_i(\beta_i) - \beta_i K_i \Big]
= (c_i'(\beta_i) -K_i) \frac{\partial \beta_i}{\partial \tilde{D}_{i+1}}
= -\frac{(\bar{K}_i - K_i)(K_{i+1} - \bar{K}_i)}{p_i c_i''(\beta_i)}.
\end{equation}
In the last steps, we used
\begin{equation*}
\frac{\partial \beta_{i-1}}{\partial \tilde{D}_i} = -\frac{(K_i - \bar{K}_{i-1})}{p_{i-1} c_{i-1}''(\beta_{i-1})},
\quad
\frac{\partial \beta_i}{\partial \tilde{D}_i} = -\frac{(\bar{K}_i - K_i)}{p_i c_i''(\beta_i)}
\quad\text{and}\quad
\frac{\partial \beta_i}{\partial \tilde{D}_{i+1}} = -\frac{(K_{i+1} - \bar{K}_i)}{p_i c_i''(\beta_i)},
\end{equation*}
which are obtained by differentiating equation $c_{i-1}'(\beta_{i-1}) = \bar{K}_{i-1}$ w.r.t. $\tilde{D}_i$ and equation $c_i'(\beta_i) = \bar{K}_i$ w.r.t. $\tilde{D}_i$ and $\tilde{D}_{i+1}$.
\qed
\end{proof}

\begin{proposition}
\label{Concavity}
The function $H:\Omega\rightarrow\mathbb{R}$ is strictly concave.
\end{proposition}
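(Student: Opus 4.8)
The plan is to prove strict concavity by showing that the Hessian $H''(\tilde{D})$ computed in Lemma \ref{H''(D)} is negative definite at every $\tilde{D}\in\Omega$; since $\Omega$ is an open (hence convex) $n$-dimensional rectangle, negative-definiteness of the Hessian throughout $\Omega$ yields strict concavity. Concretely, I would fix an arbitrary nonzero $v=(v_1,\dots,v_n)\in\mathbb{R}^n$ and show that the quadratic form $v^\top H''(\tilde{D})\,v$ is strictly negative. Because the Hessian splits additively into a discrete part $[H^d]''$ and a continuous part $[H^c]''$, I would compute the two quadratic forms separately and then add them. To keep the edge terms uniform I would adopt the convention $v_0 := 0$ and $v_{n+1} := 0$.

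For the discrete part, I would collect, for each bucket index $j=0,\dots,n$, all contributions carrying the factor $1/p_j$. The diagonal entries supply $-v_j^2/p_j$ and $-v_{j+1}^2/p_j$ while the off-diagonal entries supply $+2\,v_j v_{j+1}/p_j$, so these recombine into a perfect square. The result I expect is
\[
v^\top [H^d]''(\tilde{D})\, v = -\sum_{j=0}^{n}\frac{(v_j-v_{j+1})^2}{p_j}.
\]
Because every $p_j>0$, this is $\le 0$, and it vanishes only if $v_j=v_{j+1}$ for all $j$; together with $v_0=0$ this forces $v=0$. Hence the discrete part alone is already negative definite.

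For the continuous part, I would group, for each bucket $j$, the contributions carrying the factor $1/(p_j c_j''(\beta_j))$. Writing $a_j := \bar{K}_j - K_j>0$ and $b_j := K_{j+1}-\bar{K}_j>0$ (both positive by the no-arbitrage constraint $K_j<\bar{K}_j<K_{j+1}$), the diagonal and off-diagonal entries again recombine into a perfect square, giving
\[
v^\top [H^c]''(\tilde{D})\, v = -\sum_{j=0}^{n-1}\frac{(a_j v_j + b_j v_{j+1})^2}{p_j c_j''(\beta_j)} - \frac{a_n^2\, v_n^2}{p_n c_n''(\beta_n)}.
\]
Since $c_j''(\beta_j)>0$ (immediate from the explicit formula for $c_j''$) and $p_j>0$, every term is $\le 0$, so the continuous part is negative semidefinite.

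Adding the two, $v^\top H''(\tilde{D})\,v \le v^\top [H^d]''(\tilde{D})\,v < 0$ for $v\ne 0$, so $H''(\tilde{D})$ is negative definite and $H$ is strictly concave. The main care required is the bookkeeping of the boundary buckets $j=0$ and $j=n$: the last bucket must be handled separately, because $K_{n+1}=\infty$ makes $b_n$ infinite and there is no variable $\tilde{D}_{n+1}$, hence no off-diagonal term at the right edge, while the convention $v_0=v_{n+1}=0$ makes the interior formulas valid up to the left and right ends. I expect no deeper obstacle than verifying the perfect-square regrouping cleanly, since strict negativity already follows from the discrete part and the continuous part can only reinforce it.
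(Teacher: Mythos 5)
Your proposal is correct and follows essentially the same route as the paper's proof: both decompose the Hessian into the discrete and continuous parts, regroup each quadratic form bucket-by-bucket into sums of perfect squares weighted by $-1/p_j$ and $-1/(p_j c_j''(\beta_j))$ respectively, and conclude that the discrete part is already strictly negative definite while the continuous part is negative semidefinite. The only cosmetic difference is your convention $v_0=v_{n+1}=0$, which the paper handles instead by writing the boundary terms $-d_1^2/p_0$ and $-d_n^2/p_n$ explicitly.
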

\begin{proof}
Let $\tilde{D}\in\Omega$.
It is enough to show that $H''(\tilde{D})$ is strictly negative definite.
To accomplish this, we shall prove that $[H^d]''(\tilde{D})$ is strictly negative definite and $[H^c]''(\tilde{D})$ is negative definite.

For each $i=0, ..., n$, let $p_i$, $\bar{K}_i$ and $\beta_i$ be given by equations \eqref{p_i}, \eqref{K_i_bar} and \eqref{beta_i}, respectively.

Recall that for an $n\times n$ triangular symmetric matrix $A=(a_{i,j})$ we have
\begin{equation}
\label{TridiagonalQuadratic}
\left<Ad,d\right> =
\sum_{i=1}^n a_{i,i}d_i^2 + 2\sum_{i=1}^{n-1} a_{i,i+1}d_id_{i+1} \quad \forall
d=(d_1, ..., d_n)\in\mathbb{R}^n.
\end{equation}

Apply this relation to $A=H^d(\tilde{D})$ and $d\in\mathbb{R}^n$, with $d\ne 0$, and use \eqref{HessianDiagonalDiscrete} and \eqref{HessianOffDiagonalDiscrete} to get
\begin{align}
\left<[H^d]''(\tilde{D})d, d\right>
&= - \sum_{i=1}^n \left(\frac1{p_{i-1}} + \frac1{p_i}\right)d_i^2
   + 2\sum_{i=1}^{n-1} \frac{d_id_{i+1}}{p_i}
\nonumber \\
&= - \frac{d_1^2}{p_0}
   - \sum_{i=1}^{n-1} \frac{d_i^2 + d_{i+1}^2}{p_i}
   - \frac{d_n^2}{p_n}
   + 2\sum_{i=1}^{n-1} \frac{d_id_{i+1}}{p_i}
\nonumber \\
&= - \frac{d_1^2}{p_0}
   - \sum_{i=1}^{n-1} \frac{(d_i - d_{i+1})^2}{p_i}
   - \frac{d_n^2}{p_n}
\label{HdConcave} \\
&< 0. \nonumber
\end{align}

Now, apply \eqref{TridiagonalQuadratic} to $A=H^c(\tilde{D})$ and $d\in\mathbb{R}^n$ and use \eqref{HessianDiagonalContinuous} and \eqref{HessianOffDiagonalContinuous} to obtain
\begin{align*}
\left<[H^c]''(\tilde{D})d, d\right>
&=
- \sum_{i=1}^n
\left[
  \frac{(K_i - \bar{K}_{i-1})^2}{p_{i-1}c_{i-1}''(\beta_{i-1})}
  +
  \frac{(\bar{K}_i - K_i)^2}{p_ic_i''(\beta_i)}
\right]d_i^2
- 2\sum_{i=1}^{n-1}
\frac{(\bar{K}_i - K_i)(K_{i+1} - \bar{K}_i)d_id_{i+1}}{p_ic''(\beta_i)}
\\
&=
- \frac{(K_1 - \bar{K}_0)^2d_1^2}{p_0c_0''(\beta_0)}
- \sum_{i=1}^{n-1}
  \frac{(\bar{K}_i - K_i)^2d_i^2 + (K_{i+1} - \bar{K}_i)^2d_{i+1}^2}{p_ic_i''(\beta_i)}
- \frac{(\bar{K}_n - K_n)^2d_n^2}{p_nc_n''(\beta_n)}
\\
& \qquad\qquad\qquad\qquad
- 2\sum_{i=1}^{n-1}
\frac{(\bar{K}_i - K_i)d_i(K_{i+1} - \bar{K}_i)d_{i+1}}{p_ic''(\beta_i)}
\\
&=
- \frac{(K_1 - \bar{K}_0)^2d_1^2}{p_0c_0''(\beta_0)}
- \sum_{i=1}^{n-1}
  \frac{\left[(\bar{K}_i - K_i)d_i + (K_{i+1} - \bar{K}_i)d_{i+1}\right]^2}{p_ic_i''(\beta_i)}
- \frac{(\bar{K}_n - K_n)^2d_n^2}{p_nc_n''(\beta_n)}
\\
&\le 0.
\end{align*}
(Actually, with a small extra effort one gets strict inequality above for $d\ne 0$.)
\qed
\end{proof}

Remark that the last proof implies that $H''(\tilde{D})$ is invertible for all $\tilde{D}\in\Omega$.
This is important to assure that the Newton-Raphson step for maximizing $H$ is well defined.
We will come back to this point in Section \ref{sec:Algorithm}.

From Proposition \ref{Concavity} we immediately obtain the next result.
\begin{corollary}
\label{BKCharacterization}
The following statements are equivalent:
\begin{enumerate}
\item\label{BKDensity} $g_{\tilde{D}}\in\mathcal{G}$ is the Buchen-Kelly density.
\item\label{Critical} $\tilde{D}\in\Omega$ is a critical point of $H$.
\item\label{Continuous} $g_{\tilde{D}}\in\mathcal{G}$ is continuous.
\end{enumerate}
\end{corollary}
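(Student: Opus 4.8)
The plan is to establish the three equivalences by combining the two tools already in hand: the characterisation of continuity as criticality (Corollary \ref{Critical=Continuous}) and the strict concavity of $H$ (Proposition \ref{Concavity}). The equivalence of statements \ref{Critical} and \ref{Continuous} requires no new work, since Corollary \ref{Critical=Continuous} states precisely that $g_{\tilde{D}}$ is continuous if and only if $\tilde{D}$ is a critical point of $H$. Thus the entire task reduces to tying statement \ref{BKDensity} into this pair, and I would do so by showing that the Buchen-Kelly density corresponds to the one and only critical point.

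First I would record, from Section \ref{sec:BuchenKellyDensity}, that the Buchen-Kelly density belongs to $\mathcal{G}$ and is continuous. Writing $\tilde{D}^{BK}\in\Omega$ for the digital prices it implies, Corollary \ref{Critical=Continuous} then forces $\tilde{D}^{BK}$ to be a critical point of $H$. This already gives the implication \ref{BKDensity} $\Rightarrow$ \ref{Critical}, and, more importantly, it supplies the \emph{existence} of at least one critical point, namely the Buchen-Kelly one.

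Next I would invoke strict concavity to obtain uniqueness. Since $\Omega$ is an open convex set (an open $n$-dimensional rectangle) and $H$ is strictly concave on it by Proposition \ref{Concavity}, $H$ can possess at most one critical point: two distinct critical points $\tilde{D}^1\ne\tilde{D}^2$ would contradict strict concavity of the restriction of $H$ to the segment joining them. Combined with the previous paragraph, the unique critical point is exactly $\tilde{D}^{BK}$, so any $\tilde{D}$ satisfying statement \ref{Critical} must coincide with $\tilde{D}^{BK}$; this yields \ref{Critical} $\Rightarrow$ \ref{BKDensity}. Together with the already-noted equivalence of \ref{Critical} and \ref{Continuous}, all three statements are equivalent.

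I do not expect any serious obstacle, as the corollary is essentially a repackaging of Corollary \ref{Critical=Continuous} and Proposition \ref{Concavity}. The only point demanding care is the logical bookkeeping: strict concavity by itself yields \emph{at most} one critical point but says nothing about existence, so existence must be imported separately from the fact that the Buchen-Kelly density is a continuous member of $\mathcal{G}$. Once that single external input is in place, the uniqueness argument from Proposition \ref{Concavity} completes the proof.
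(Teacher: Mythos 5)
Your proposal is correct and follows essentially the same route as the paper: the equivalence of \ref{Critical} and \ref{Continuous} is Corollary \ref{Critical=Continuous}, the implication \ref{BKDensity} $\Rightarrow$ \ref{Critical} comes from the continuity (equivalently, the entropy-maximising property) of the Buchen-Kelly density, and strict concavity of $H$ (Proposition \ref{Concavity}) on the open convex set $\Omega$ supplies uniqueness of the critical point, closing the loop \ref{Critical} $\Rightarrow$ \ref{BKDensity}. Your explicit remark that strict concavity gives only \emph{at most} one critical point, with existence imported from the Buchen-Kelly density, is exactly the bookkeeping the paper's one-line proof leaves implicit.
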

\begin{proof}
We already knew that \ref{BKDensity} $\Rightarrow$ \ref{Critical} $\Leftrightarrow$ \ref{Continuous},
and the last proposition fills the gap \ref{Critical} $\Rightarrow$ \ref{BKDensity}.
\qed
\end{proof}

\subsection{Some Remarks Regarding the Buchen-Kelly Density}
\label{sec:BKNotation}

We conclude this section by giving some simple but useful formulas for expressing the Buchen-Kelly density in terms of our $\alpha_i$'s and $\beta_i$'s, and vice versa.
The analytical formulas from Section 2.3 in \cite{NeriSchneider2009} can then be applied to densities in Buchen-Kelly form.

Buchen and Kelly label the calls $\tilde{C}_i$ observed at strikes $K_i$ in the market from $i=1, ...,m$.
The density in \cite{BuchenKelly1996} is explicitly given there in equation (8), for $c_i(x) = (x - K_i)^+$, by
\begin{equation}
\label{BuchenKellyDensity}
g(x) = \frac{1}{\mu} \exp \left( \sum_{i=1}^m \lambda_i (x - K_i)^+ \right), \quad \mu = \int_0^\infty \exp \left( \sum_{i=1}^m \lambda_i (x - K_i)^+ \right) dx.
\end{equation}
For $x \in [K_j, K_{j+1}[$, we can write this as
\begin{equation*}
g(x) = \frac{1}{\mu} \exp \left( -\sum_{i=1}^j \lambda_i K_i \right) \exp \left( \sum_{i=1}^j \lambda_i x \right).
\end{equation*}
Assume $K_1 = 0$. Then we obtain the following conversion formulas:
\begin{alignat*}{4}
\alpha_0 &= \frac{1}{\mu}, & \quad
\alpha_1 &= \frac{1}{\mu} e^{-\lambda_2 K_2}, & \quad
&..., & \quad
\alpha_{m-1} &= \frac{1}{\mu} e^{-\sum_{i=2}^m \lambda_i K_i},
\\
\beta_0 &= \lambda_1, &
\beta_1 &= \lambda_1 + \lambda_2, &
&..., &
\beta_{m-1} &= \sum_{i=1}^m \lambda_i.
\end{alignat*}

Obviously only a continuous density can be written in the Buchen-Kelly form. For such a density, we have
\begin{alignat*}{4}
\lambda_1 &= \beta_0, & \quad
\lambda_2 &= \beta_1 - \beta_0, & \quad
&..., & \quad
\lambda_m &= \beta_{m-1} - \beta_{m-2},
\\
\mu &= \frac{1}{\alpha_0}.
\end{alignat*}

\section{An Algorithm to Find the Buchen-Kelly Density}
\label{sec:Algorithm}

Corollary \ref{BKCharacterization} says that if $\tilde{D}\in\Omega$ is a root of $H'$, then $g_{\tilde{D}}$ is the Buchen-Kelly density.
Since the Hessian matrix of $H$ is known analytically, the Newton-Raphson method is a possible choice of algorithm to find the root of $H'$ numerically,
and consequently the Buchen-Kelly density.

The choice of the Newton-Raphson method for root-finding problems is not without
its concerns.
Nevertheless, in this section we shall see that for our particular problem these concerns vanish if we combine a {\it pure} Newton-Raphson method with an initial phase of backtracking line search.
This is the {\it damped} Newton method presented in \cite{BoydVandenberghe2004}, page 487.

First of all, there is the issue of finding the {\em wrong} root when there is more than one.
Of course, this cannot happen here since uniqueness holds.

Then there is the high cost of computing and storing the, normally, $n^2$ entries of the Hessian matrix.
In our case, this matrix is symmetric and tridiagonal (Lemma \ref{H''(D)}) and, thus, the cost applies only to $2n-1$ entries.

Each step of the Newton-Raphson method for finding the root of $H'$ entails solving, for $d\in\mathbb{R}^n$, the linear system $H''(\tilde{D}) d = H'(\tilde{D})$, where $\tilde{D}\in\Omega$ is the current guess for the solution.
Therefore, the method fails if $H''(\tilde{D})$ is not invertible.
However, this cannot happen since  we have $\langle H''(\tilde{D})d, d\rangle < 0$ if $d\ne 0$ (Proposition \ref{StrongConcavity}).

Again, solving the $n\times n$ linear system above might be very costly in general.
Here, since $H'(\tilde{D})$ is tridiagonal, the system can be efficiently solved in $\mathcal{O}(n)$ operations using, for example the routine
{\it tridag} from \cite{PVTF2002}.
Regarding the stability of this routine, we quote \cite{PVTF2002}, page 57:
\begin{quote}
The tridiagonal algorithm is the rare case of an algorithm that, in practice, is more robust than theory says it should be.
\end{quote}
Instability of the linear system is then unlikely but if it does occur, we still have room for improvement. There is a well known link between instability and $H''(\tilde{D})$ having a large condition number. The latter is connected to the eccentricity of level curves of $H$. Hence the geometry of the $n$-dimensional rectangle $\Omega$ plays an important role. By making $\Omega$ less eccentric through affine transformations, one can decrease the condition number of $H''(\tilde{D})$, thus improving stability of the linear system. This does not affect the Newton-Raphson algorithm since it is invariant by affine transformations \cite{BoydVandenberghe2004}.

It is worth mentioning that linear tridiagonal systems arise in a broad range of problems.
For this reason they have been extensively studied and several algorithms (including parallel ones) have been proposed \cite{Higham1990}, \cite{Lewis1982}, \cite{LinChung1990}.

The last concern is that a {\it pure} Newton-Raphson method does not guarantee that each guess stays in $\Omega$, the domain of $H$.
In combination with the backtracking line search this issue becomes just an implementation detail as explained in \cite{BoydVandenberghe2004}, page 465.

The remainder of this section presents results that allow us to follow the convergence analysis of the algorithm presented in \cite{BoydVandenberghe2004}.

The next result strengthens Proposition \ref{Concavity}.
\begin{proposition}
\label{StrongConcavity}
The function $H:\Omega\rightarrow\mathbb{R}$ is strongly concave, that is, there exists an $m>0$ such that
\begin{equation*}
\left<H''(\tilde{D}) d, d\right> \le -m\|d\|^2, \quad \forall \tilde{D}\in\Omega \quad \text{and} \quad \forall d\in\mathbb{R}^n.
\end{equation*}
Furthermore, the constant $m$ can be taken as $4\sin^2(\pi/(2n+2))$.
\end{proposition}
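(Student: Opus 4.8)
The plan is to bound the Hessian's quadratic form from above by a quadratic form that no longer depends on the $p_i$ or on the continuous part, and then to read off $m$ as the smallest eigenvalue of a fixed tridiagonal matrix. I would start exactly where the proof of Proposition \ref{Concavity} ended. There we obtained
\[
\langle [H^d]''(\tilde{D})d, d\rangle = -\frac{d_1^2}{p_0} - \sum_{i=1}^{n-1}\frac{(d_i - d_{i+1})^2}{p_i} - \frac{d_n^2}{p_n},
\]
and showed that $\langle [H^c]''(\tilde{D})d, d\rangle \le 0$. Since $H''(\tilde{D}) = [H^d]''(\tilde{D}) + [H^c]''(\tilde{D})$, adding these gives $\langle H''(\tilde{D})d, d\rangle \le \langle [H^d]''(\tilde{D})d, d\rangle$, so it suffices to bound the discrete part uniformly in $\tilde{D}$.

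The first key observation is that the $p_i$ are the probabilities of the $n+1$ buckets, so $\sum_{i=0}^n p_i = 1$ with each $p_i > 0$; since $n \ge 1$ there are at least two buckets, whence $0 < p_i < 1$ and therefore $1/p_i > 1$ for every $i$. Each term in the discrete form is nonpositive, so replacing $1/p_i$ by the smaller value $1$ can only raise the (negative) quadratic form:
\[
\langle H''(\tilde{D})d, d\rangle \le -\Big(d_1^2 + \sum_{i=1}^{n-1}(d_i - d_{i+1})^2 + d_n^2\Big) =: -\langle Md, d\rangle .
\]
The point of this step is that the right-hand side is now completely independent of $\tilde{D}$.

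The second key step is to identify $M$. Expanding the telescoping sum of squares, $M$ is the $n\times n$ tridiagonal matrix with $2$ on the diagonal and $-1$ on the two adjacent diagonals, i.e. the classical discrete Dirichlet Laplacian, whose eigenvalues are known in closed form,
\[
\lambda_k = 2 - 2\cos\Big(\frac{k\pi}{n+1}\Big) = 4\sin^2\Big(\frac{k\pi}{2n+2}\Big), \quad k = 1, \ldots, n,
\]
with the smallest one being $\lambda_1 = 4\sin^2(\pi/(2n+2))$. Consequently $\langle Md, d\rangle \ge 4\sin^2(\pi/(2n+2))\,\|d\|^2$, and combining with the previous display yields $\langle H''(\tilde{D})d, d\rangle \le -4\sin^2(\pi/(2n+2))\,\|d\|^2$ for all $\tilde{D}\in\Omega$ and all $d\in\mathbb{R}^n$, which is the claim with $m = 4\sin^2(\pi/(2n+2))$.

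The main obstacle, and really the only nonroutine ingredient, is recognizing the reduced form as the second-difference matrix and invoking its explicit spectrum; everything else is a sign-monotone estimate using $p_i < 1$. If one preferred to avoid quoting the eigenvalues, one could instead verify directly that the vectors $\big(\sin(kj\pi/(n+1))\big)_{j=1}^n$ diagonalize $M$ via a short trigonometric identity, but quoting the standard result is cleaner.
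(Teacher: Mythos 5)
Your proposal is correct and follows essentially the same route as the paper's own proof: discard the continuous part as nonpositive, use $p_i \le 1$ to bound the discrete part by the quadratic form of the second-difference matrix, and invoke its known spectrum $4\sin^2(k\pi/(2n+2))$. The only cosmetic difference is that you justify $p_i < 1$ explicitly, whereas the paper simply uses $p_i \le 1$; both suffice.
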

\begin{proof}
In the proof of Proposition \ref{Concavity} we showed that both $H^d$ and $H^c$ satisfy a relationship similar to the one we aim to prove but with $m=0$.
Strengthening this relationship for $H^d$ to $m>0$ is enough to finish the proof.

Let $\tilde{D}\in\Omega$ and $d\in\mathbb{R}^n$.
From \eqref{HdConcave}, and using that $p_i\le 1$ for all $i=0, ..., n,$ we get
\begin{align*}
\left<[H^d]''(\tilde{D})d, d\right>
&\le
- d_1^2 -\sum_{i=1}^{n-1} (d_i - d_{i+1})^2 - d_n^2
=
- d_1^2 - \sum_{i=1}^{n-1} (d_i^2 + d_{i+1}^2) + 2\sum_{i=1}^{n-1} d_id_{i+1} - d_n^2
\\
&=
- 2\sum_{i=1}^n d_i^2 + 2\sum_{i=1}^{n-1} d_id_{i+1} = - \left<Ad, d\right>,
\end{align*}
where $A$ is the $n\times n$ tridiagonal symmetric matrix with $2$'s in the diagonal and $-1$'s in the off-diagonals.
This is a well known matrix that arises in the finite difference discretisation of the heat equation.
Its eigenvalues are $4\sin^2(k\pi/(2n + 2))$ for $k=1, ..., n$ \cite{WeidemanTrefethen1988}.
Therefore the result holds if we take $m=4\sin^2(\pi/(2n + 2))$, the smallest eigenvalue of $A$.
\qed
\end{proof}

Since $H$ is strongly concave, we can apply the arguments of \cite{BoydVandenberghe2004}, Section 9.1.2, which we repeat in the next proposition for the sake of completeness.
This proposition gives estimates on how far an element $\tilde{D}\in\Omega$ is from the Buchen-Kelly digitals and how much $H(\tilde{D})$ is below the maximum.
Both estimates are given in terms of $\|H'(\tilde{D})\|$.
Since $H'(\tilde{D})$ is easily computed (Proposition \ref{H'(D)}) at each Newton-Raphson step, these estimates can be used to define a stopping criterion.

\begin{proposition}
\label{Estimates}
Let $\hat{D}\in\Omega$ be such that $g_{\hat{D}}\in\mathcal{G}$ is the Buchen-Kelly density and let $m>0$ be as in Proposition \ref{StrongConcavity}.
Then, for all $\tilde{D}\in\Omega$ we have
\begin{equation*}
H(\hat{D}) - H(\tilde{D}) \le \frac1{2m} \|H'(\tilde{D})\|^2
\quad\text{and}\quad
\|\hat{D} - \tilde{D}\| \le \frac2{m} \|H'(\tilde{D})\|.
\end{equation*}
\end{proposition}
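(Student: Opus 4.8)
The plan is to follow the standard strong-concavity argument of Boyd and Vandenberghe by first establishing a quadratic upper bound for $H$ and then exploiting the fact that $\hat{D}$ is the global maximiser. Two ingredients are assembled: the second-order Taylor expansion with mean-value remainder, available because $H$ is twice differentiable (Lemma \ref{H''(D)}) and $\Omega$ is convex, being an open rectangle, so the segment joining any two of its points stays inside $\Omega$; and the strong concavity bound $\langle H''(z)d, d\rangle \le -m\|d\|^2$ from Proposition \ref{StrongConcavity}. Combining them, for any $\tilde{D}, y \in \Omega$ there is a point $z$ on the segment $[\tilde{D}, y]$ with
\begin{align*}
H(y) &= H(\tilde{D}) + \langle H'(\tilde{D}), y - \tilde{D}\rangle + \tfrac12\langle H''(z)(y-\tilde{D}), y-\tilde{D}\rangle \\
&\le H(\tilde{D}) + \langle H'(\tilde{D}), y-\tilde{D}\rangle - \tfrac{m}{2}\|y-\tilde{D}\|^2.
\end{align*}

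For the first estimate I would observe that the right-hand side above is a concave quadratic in $y$, whose maximum over all of $\mathbb{R}^n$ is attained at $y = \tilde{D} + m^{-1}H'(\tilde{D})$ with value $H(\tilde{D}) + (2m)^{-1}\|H'(\tilde{D})\|^2$. Since this unconstrained maximum dominates the quadratic at every $y \in \Omega$, and in particular at the legitimate point $y = \hat{D}$, I obtain $H(\hat{D}) \le H(\tilde{D}) + (2m)^{-1}\|H'(\tilde{D})\|^2$, which is exactly the first claim.

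For the second estimate I would instead substitute $y = \hat{D}$ directly into the quadratic upper bound. Because $\hat{D}$ maximises $H$ we have $H(\hat{D}) - H(\tilde{D}) \ge 0$, so the bound rearranges to $\tfrac{m}{2}\|\hat{D} - \tilde{D}\|^2 \le \langle H'(\tilde{D}), \hat{D} - \tilde{D}\rangle$, and Cauchy--Schwarz on the right then gives $\tfrac{m}{2}\|\hat{D} - \tilde{D}\|^2 \le \|H'(\tilde{D})\|\,\|\hat{D} - \tilde{D}\|$. Cancelling one factor of $\|\hat{D} - \tilde{D}\|$, with the case $\hat{D} = \tilde{D}$ being trivial, yields $\|\hat{D} - \tilde{D}\| \le 2m^{-1}\|H'(\tilde{D})\|$.

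The only point requiring care — which I would flag as a subtlety rather than a genuine obstacle — is that $\Omega$ is a proper open subset of $\mathbb{R}^n$, so the unconstrained maximiser $\tilde{D} + m^{-1}H'(\tilde{D})$ used in the first estimate need not lie in $\Omega$. This causes no difficulty, since it enters only as an upper bound: the chain $H(y) \le (\text{quadratic in } y) \le \max_{z\in\mathbb{R}^n}(\text{quadratic in }z)$ holds for every $y \in \Omega$, and the final inequality is applied only at $y = \hat{D} \in \Omega$. Everything else is routine, relying solely on the convexity of $\Omega$ to justify the Taylor remainder and on Proposition \ref{StrongConcavity} for the uniform constant $m$.
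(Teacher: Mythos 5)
Your proof is correct and follows essentially the same route as the paper: a second-order Taylor expansion combined with the strong concavity constant $m$ from Proposition \ref{StrongConcavity}, maximising the resulting concave quadratic at $m^{-1}H'(\tilde{D})$ for the first estimate, and using $H(\hat{D})-H(\tilde{D})\ge 0$ together with Cauchy--Schwarz for the second. The domain subtlety you flag is handled implicitly in the paper in exactly the way you describe, so there is nothing to add.
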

\begin{proof}
Let $\tilde{D}\in\Omega$ and set $d=\hat{D} - \tilde{D}$. The second-order Taylor expansion of $H$ around $\tilde{D}$ gives $\theta\in(0,1)$ such that
\begin{equation*}
H(\hat{D}) - H(\tilde{D})
=
\left<H'(\tilde{D}),d\right>
+ \frac12 \left<H''(\tilde{D} + \theta d)d, d\right>
\le
\left<H'(\tilde{D}), d\right>
- \frac{m}2 \|d\|^2.
\end{equation*}
Seen as a function of $d\in\mathbb{R}^n$, the rightmost term above is a concave quadratic function which reaches its maximum at $m^{-1}H'(\tilde{D})$.
Therefore, substituting $d$ by $m^{-1}H'(\tilde{D})$ yields our first result.

By the Cauchy-Schwarz inequality we have $\left<H'(\tilde{D}),d\right> \le \|H'(\tilde{D})\|\|d\|$.
Therefore,
\begin{equation*}
H(\hat{D}) - H(\tilde{D})
\le
\left<H'(\tilde{D}), d\right> - \frac{m}2 \|d\|^2
\le
\|d\|\left(\|H'(\tilde{D})\|- \frac{m}2 \|d\|\right).
\end{equation*}
Since $H(\hat{D}) - H(\tilde{D}) \ge 0$, so is the righthand side of the expression above, which implies the second result.
\qed
\end{proof}

We also have an estimate for the distance (in the sense of the $L^1(0,\infty)$ norm) between $g_{\tilde{D}}\in\mathcal{G}$ and the Buchen-Kelly density.
First we need to measure this distance in terms of the relative entropy.

For two strictly positive probability densities $f$ and $g$ over $[0, \infty)$, the relative entropy or {\it I-divergence} of $f$ with respect to $g$ is given by
\begin{equation*}
I(f\|g) = \int_0^\infty f(x)\ln\frac{f(x)}{g(x)} dx.
\end{equation*}
The relative entropy of $f$ with respect to $g$ is, in some ways, similar to a measure of the ``distance'' between these two densities \cite{Csiszar1975}.
For instance, $I(f\|g)\ge0$ with equality holding if, and only if, $f=g$.
However, the relative entropy is not a metric.

\begin{theorem}
\label{Th:L1Difference}
Let $g_{\hat{D}}\in\mathcal{G}$ be the Buchen-Kelly density. For all $\tilde{D}\in\Omega$, we have
\begin{equation*}
\|g_{\hat{D}} - g_{\tilde{D}}\|_{L^1} \le \frac{1}{\sqrt{m}}\|H'(\tilde{D})\|.
\end{equation*}
\end{theorem}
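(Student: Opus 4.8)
The plan is to chain together three ingredients: the Csisz\'ar-Kullback inequality, a Pythagorean-type identity for the relative entropy, and the entropy estimate already established in Proposition \ref{Estimates}. Recall that the Csisz\'ar-Kullback inequality \cite{Csiszar1975} states that for two strictly positive probability densities $f$ and $g$ one has $\|f - g\|_{L^1}^2 \le 2\, I(f\|g)$. Since the $L^1$-norm is symmetric in its arguments, I am free to apply this with $f = g_{\tilde{D}}$ and $g = g_{\hat{D}}$, so that the relative entropy appearing on the right is $I(g_{\tilde{D}}\|g_{\hat{D}})$, which will turn out to be the convenient direction to evaluate.

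The crux of the argument is the identity
\begin{equation*}
I(g_{\tilde{D}}\|g_{\hat{D}}) = H(\hat{D}) - H(\tilde{D}).
\end{equation*}
To prove it, I would write $I(g_{\tilde{D}}\|g_{\hat{D}}) = -H(g_{\tilde{D}}) - \int_0^\infty g_{\tilde{D}}(x)\ln g_{\hat{D}}(x)\,dx$ and exploit the Buchen-Kelly form \eqref{BuchenKellyDensity} of $g_{\hat{D}}$, which gives $\ln g_{\hat{D}}(x) = -\ln\mu + \sum_i \lambda_i (x - K_i)^+$. Integrating this expression against \emph{any} density matching the call constraints \eqref{CallConstraints} yields the same value $-\ln\mu + \sum_i \lambda_i \tilde{C}_i$, because the call prices $\tilde{C}_i$ are held fixed across all of $\mathcal{G}$. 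Taking the density to be $g_{\hat{D}}$ itself identifies this common value with $\int_0^\infty g_{\hat{D}}\ln g_{\hat{D}} = -H(\hat{D})$, whence $\int_0^\infty g_{\tilde{D}}(x)\ln g_{\hat{D}}(x)\,dx = -H(\hat{D})$ as well. Substituting back delivers the identity. This is the step I expect to be the main obstacle, not because it is computationally heavy but because it rests on recognising that $g_{\hat{D}}$ belongs to the exponential family generated by the call payoffs and that every element of $\mathcal{G}$ shares those same call prices; once this is noticed, the cross term $\int g_{\tilde{D}}\ln g_{\hat{D}}$ collapses to $-H(\hat{D})$ and the identity is immediate.

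With the identity in hand the remainder is routine. By Proposition \ref{Estimates} we have $H(\hat{D}) - H(\tilde{D}) \le \frac{1}{2m}\|H'(\tilde{D})\|^2$, and combining this with the Csisz\'ar-Kullback bound gives
\begin{equation*}
\|g_{\hat{D}} - g_{\tilde{D}}\|_{L^1}^2 \le 2\,I(g_{\tilde{D}}\|g_{\hat{D}}) = 2\bigl(H(\hat{D}) - H(\tilde{D})\bigr) \le \frac{1}{m}\|H'(\tilde{D})\|^2.
\end{equation*}
Taking square roots yields the stated estimate $\|g_{\hat{D}} - g_{\tilde{D}}\|_{L^1} \le \frac{1}{\sqrt{m}}\|H'(\tilde{D})\|$.
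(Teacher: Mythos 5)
Your proposal is correct and follows essentially the same route as the paper's own proof: the same computation of the cross term $\int_0^\infty g_{\tilde{D}}\ln g_{\hat{D}}\,dx$ via the exponential form of $g_{\hat{D}}$ and the shared call constraints, the same identification of that constant with $-H(\hat{D})$ by taking $\tilde{D}=\hat{D}$, and the same combination of the Csisz\'ar--Kullback inequality with Proposition \ref{Estimates}. Nothing to add.
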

\begin{proof}
Recall that the Buchen-Kelly density has the form $g_{\hat{D}}(x) = \mu^{-1}\exp\left(\sum_{i=0}^n \lambda_i(x-K_i)^+\right)$ for all $x>0$ and some constants $\mu>0$, $\lambda_0, ..., \lambda_n$.
For any $\tilde{D}\in\Omega$ we have
\begin{equation*}
\int_0^\infty g_{\tilde{D}}\ln g_{\hat{D}} dx
=
- \int_0^\infty g_{\tilde{D}}\ln\mu dx
+ \sum_{i=0}^n \lambda_i \int_0^\infty(x-K_i)^+ g_{\tilde{D}} dx
=
- \ln\mu + \sum_{i=0}^n \lambda_i \tilde{C}_i.
\end{equation*}
In particular, the integral above does not depend on $\tilde{D}$.
Moreover, taking $\tilde{D}=\hat{D}$ gives that the integral is equal to $-H(\hat{D})$.
Therefore,
\begin{equation*}
I(g_{\tilde{D}}\|g_{\hat{D}})
=
\int_0^\infty g_{\tilde{D}}\ln\frac{g_{\tilde{D}}}{g_{\hat{D}}} dx
=
\int_0^\infty g_{\tilde{D}}\ln g_{\tilde{D}} dx
-
\int_0^\infty g_{\tilde{D}}\ln g_{\hat{D}} dx
=
H(\hat{D}) - H(\tilde{D}).
\end{equation*}

From the Csisz\'ar-Kullback inequality \cite{ArnoldMarkowichToscaniUnterreiter2000} and Proposition \ref{Estimates} we get
\begin{equation*}
\|g_{\hat{D}} - g_{\tilde{D}}\|_{L^1}^2
\le
2 I(g_{\tilde{D}}\|g_{\hat{D}})
=
2\left(H(\hat{D}) - H(\tilde{D})\right)
\le
\frac1{m} \|H'(\tilde{D})\|^2,
\end{equation*}
and the result follows.
\qed
\end{proof}

\subsection{Two Low-Dimensional Examples}

To see that usually we are dealing with a very smooth problem, consider the case where the forward $F = \tilde{C}_0$ and a call price $\tilde{C}_1$
at a strike $K_1$ are given. The digital $\tilde{D}_1$ (at the same strike $K_1$) can vary between the bounds imposed by \eqref{NoArbitrageDigitals}.

\begin{figure}[ht]
\centering
\subfigure[Graph of $H(\tilde{D}_1)$.] {
\includegraphics[width=.45\textwidth]{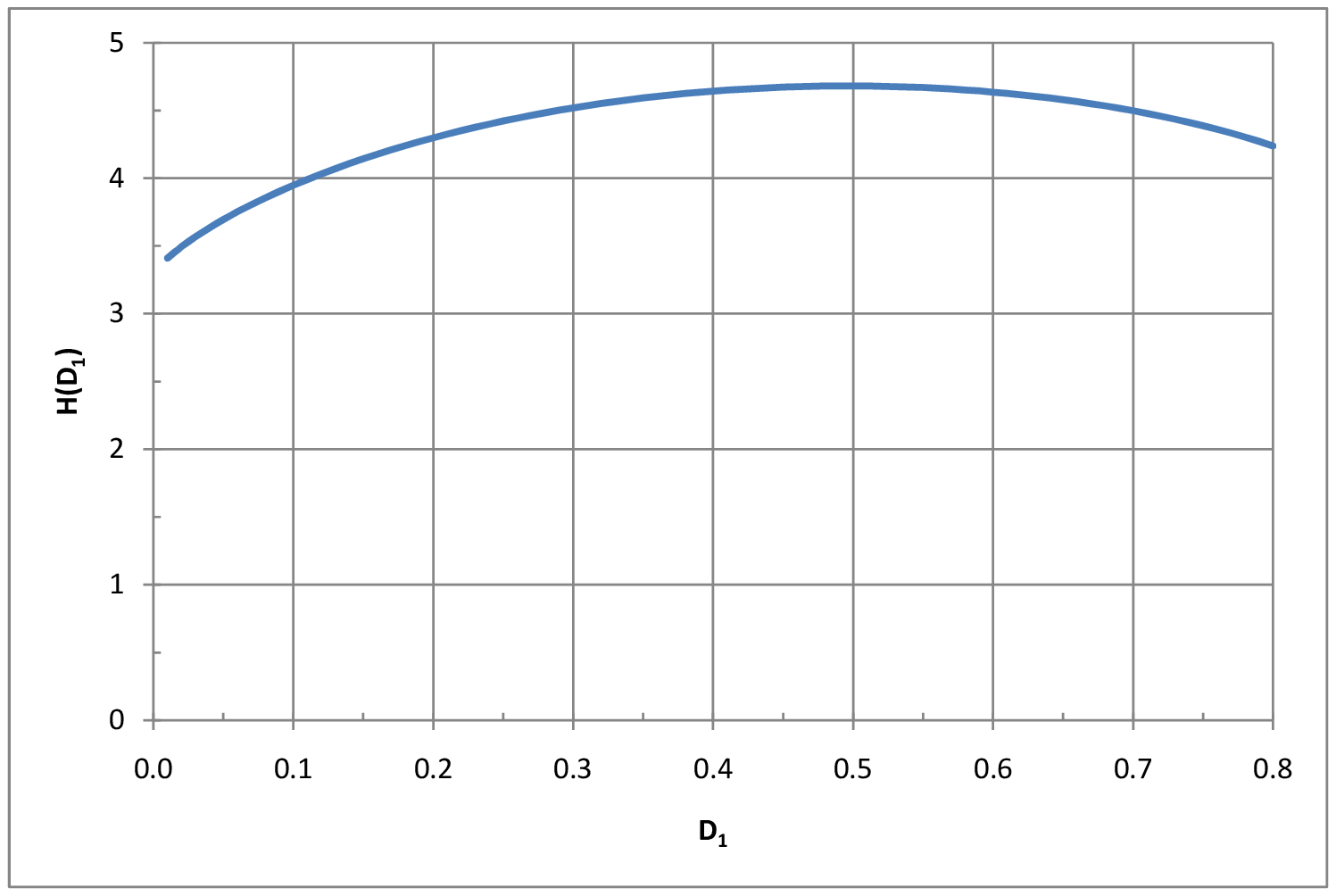}
\label{Fig:2a}
}
\subfigure[Graph of $H(\tilde{D}_1, \tilde{D}_2)$.] {
\includegraphics[width=.45\textwidth]{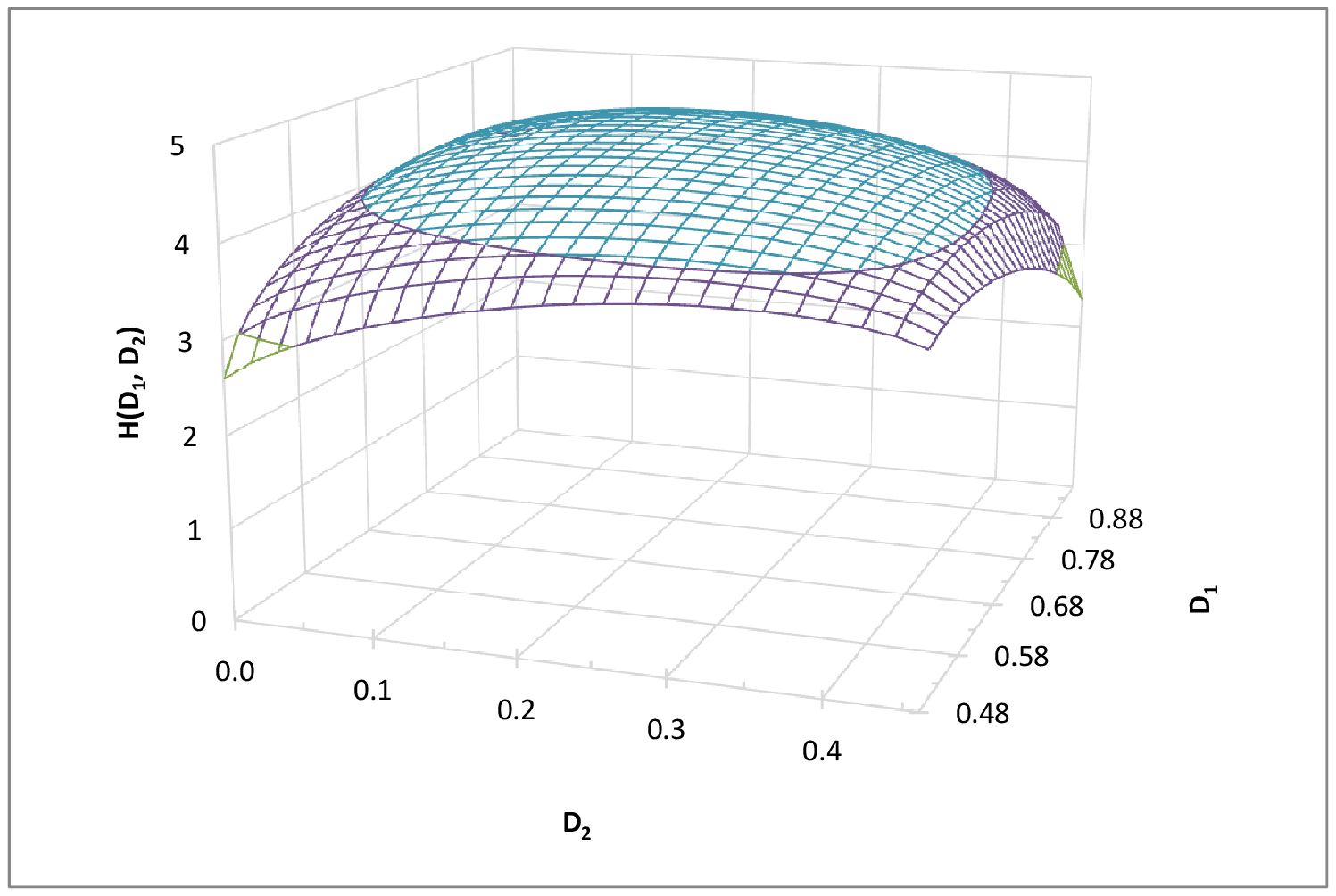}
\label{Fig:2b}
}
\caption{Graphs of $H(\tilde{D}_1)$ and $H(\tilde{D}_1, \tilde{D}_2)$.}
\label{Fig:2}
\end{figure}

In Figure \ref{Fig:2a} we have plotted the graph of $H$ for
\begin{alignat*}{2}
F &= 100.00,
\\
K_1 &= 100.00, & \quad \tilde{C}_1 &= 9.9477.
\end{alignat*}
By \eqref{NoArbitrageDigitals} we must have $\tilde{D}_1 \in \Omega = ]0, 0.9005[$.

$H$ attains its maximum of $4.6801$ at $\tilde{D}_1 = 0.4962$.
If we think of prices being given by the Black-Scholes formula with $r = 0$, $T = 1$ and $\sigma = 0.25$, then the Black-Scholes price of the digital is $\tilde{D}_1^{BS} = 0.4503$.
Note that over most of the interval $\Omega$, $H'$ is basically a linear function, so that starting with a decent guess for $\tilde{D}_1$ leads almost immediately to the solution.
For example, if we start with the middle digital $\tilde{D}_1^{m} = 0.4503$ (which, curiously, is almost, but not quite, identical to $\tilde{D}_1^{BS}$), the algorithm takes just $2$ steps to find the root to a tolerance of $10^{-9}$.

In Figure \ref{Fig:2b} we have plotted $H(\tilde{D}_1, \tilde{D}_2)$ for
\begin{alignat*}{2}
F &= 100.00,
\\
K_1 &= 80.00, & \quad \tilde{C}_1 &= 22.2656,
\\
K_2 &= 120.00, & \tilde{C}_2 &= 3.7059.
\end{alignat*}
In this set up we have $\Omega = ]0.4640, 0.9717[ \times ]0, 0.4640[$.
Again, starting with middle digital prices $\tilde{D}_1^{m} = 0.7178, \tilde{D}_2^{m} = 0.2320$ leads to the maximum
$H(0.7884, 0.1991) = 4.6208$ in $3$ steps.

\subsection{Some Differences to Buchen and Kelly's Algorithm}

It's easy to suggest a starting point for the root search, for example the middle between the left and right hand sides of
equation \eqref{NoArbitrageDigitals}, or alternatively a single call spread as in equation \eqref{MiddleDigitalPrice},
which amounts to the same thing if the spacing between the strikes is constant.

Note that compared to the Buchen-Kelly algorithm:
\begin{itemize}
\item
We start with a density that already matches all given call prices, and the root-search takes place in the family $\mathcal{G}$ of densities
matching the call prices.
Therefore, at each step in the algorithm, the occurring density has a precise financial interpretation.
\item
The Hessian matrix $H''$ is tridiagonal, so that inverting it is extremely easy and fast.
In \cite{BuchenKelly1996}, the corresponding matrix is given as the covariance matrix between the calls at the different strikes,
and all elements will usually be non-zero.
\item
The dimension of the root-finding problem considered here is one less than in \cite{BuchenKelly1996}.
For example, given the forward $F = \tilde{C}_0$ and a call $\tilde{C}_1$ as in the first scenario above,
this is a one-dimensional problem, not two-dimensional.
\end{itemize}

\section{Comparing Densities for an Increasing Number of Strikes}
\label{ComparingDensities}

When call prices are available at a large number of strikes, then the upper and lower boundaries
given by left and right call spreads, respectively, define rather tight intervals for arbitrage-free digital prices.
Of course, in the limiting case where call prices are available at all strikes, they completely determine the density,
as is well known from \cite{BreedenLitzenberger1978}, and the digital prices are uniquely determined.

When call prices are available at only a small number of strikes, the digital prices can be from rather
large intervals, and densities matching the call prices can differ quite substantially. However, as the number of strikes increases,
these differences become smaller and smaller.
We measure these ``distances'' between the densities using relative entropy.
Note that by the Csisz\'ar-Kullback inequality \cite{ArnoldMarkowichToscaniUnterreiter2000} convergence in terms of relative entropy implies $L^1$-convergence.

For two piecewise exponential densities, the relative entropy can be easily calculated.
If, for $x \in [K_i, K_{i+1}[$, we have $g(x) = \alpha_i e^{\beta_i x}$ and the prior $p(x) = \hat{\alpha}_i e^{\hat{\beta_i} x}$, then
the relative entropy is given by
\begin{equation}
\label{RelativeEntropyForExponentialDensities}
\begin{split}
I(g\|p)
&= \sum_{i=0}^n \ln \left( \frac{\alpha_i}{\hat{\alpha}_i} \right) \frac{\alpha_i}{\beta_i}(e^{\beta_i K_{i+1}} - e^{\beta_i K_i})
\\
&+ \sum_{i=0}^n (\beta_i - \hat{\beta_i})
\left(
\frac{\alpha_i}{\beta_i}(K_{i+1} e^{\beta_i K_{i+1}} - K_i e^{\beta_i K_i}) - \frac{\alpha_i}{\beta_i^2}(e^{\beta_i K_{i+1}} - e^{\beta_i K_i})
\right).
\end{split}
\end{equation}

Of course, from the proof of Theorem \ref{Th:L1Difference}, we know that in case $p$ is the Buchen-Kelly density and $g$ a density
matching the same call prices, then $I(g\|p) = H(p) - H(g)$ holds.

\subsection{A Black-Scholes Example without Volatility Skew}
\label{BlackScholesExample}

In this first example, assume a flat Black-Scholes market given by the data
\begin{equation*}
F = 100.00, \quad r = 0, \quad \sigma = 0.25 \quad \text{and} \quad T = 1,
\end{equation*}
where all call option prices are calculated with the Black-Scholes formula.

First, we calculate the Buchen-Kelly density in four cases, when $3,\ 5,\ 9$ and $17$ strikes are given, respectively,
with spacing decreasing from $40,\ 20,\ 10$ to $5$.

Second, we calculate the entropy maximising density from \cite{NeriSchneider2009} using digital prices obtained via centered call spreads (CCS)
\begin{equation}
\label{MiddleDigitalPrice}
\tilde{D}_i^m := -\frac{\tilde{C}_{i+1} - \tilde{C}_{i-1}}{K_{i+1} - K_{i-1}}, \quad i=2, ..., n-1.
\end{equation}
At left and right endpoints ($i=1,n$), we don't apply this formula and use the Buchen-Kelly digital prices instead.

Third, we calculate the entropy maximising density from \cite{NeriSchneider2009} using Black-Scholes digital prices obtained from the formula
\begin{equation}
\label{BlackScholesDigitalPrice}
\tilde{D}_i^{BS} := N \left( \frac{\ln \left( \frac{F}{K} \right) - \frac{1}{2} \sigma^2 T}{\sigma \sqrt T} \right), \quad i=1, ..., n,
\end{equation}
where $N$ is the standard normal cumulative distribution function.

Digital prices and entropy of the different densities are reported in Table \ref{tab:BS}.

We also calculate the relative entropies with respect to the first density.
It can be seen that, as the number of strikes increases, the approximations given by either \eqref{MiddleDigitalPrice} or \eqref{BlackScholesDigitalPrice} become closer to the Buchen-Kelly digital prices and the relative entropy, obtained using \eqref{RelativeEntropyForExponentialDensities} with the Buchen-Kelly density as the prior density, decreases.

Note also that in this flat market scenario, the digital prices given by the Black-Scholes formula are practically the same as the Buchen-Kelly
prices if five strikes or more are calibrated to.

\begin{sidewaystable}[hbp]
  \scriptsize
  \centering
  \caption{A Flat Black-Scholes World}
    \begin{tabular}{rrrrrrrrrrrrrrrrrrrrrr}
    \addlinespace
    \toprule
          &         &           & K       & 60     & 65     & 70     & 75     & 80     & 85     & 90     & 95     & 100   & 105   & 110   & 115   & 120   & 125   & 130   & 135   & 140   \\
       n  & Entropy & Rel. Ent. & \~C     & 40.145 & 35.346 & 30.719 & 26.336 & 22.266 & 18.565 & 15.272 & 12.401 & 9.948 & 7.889 & 6.190 & 4.811 & 3.706 & 2.832 & 2.149 & 1.620 & 1.214 \\
    \midrule
       3  & 4.616   & 0.000     & BK \~D  &  0.967 &        &        &        &        &        &        &        & 0.465 &       &       &       &       &       &       &       & 0.070 \\
       5  & 4.608   & 0.000     &         &  0.973 &        &        &        &  0.779 &        &        &        & 0.451 &       &       &       & 0.197 &       &       &       & 0.070 \\
       9  & 4.607   & 0.000     &         &  0.974 &        &  0.903 &        &  0.779 &        &  0.617 &        & 0.450 &       & 0.306 &       & 0.196 &       & 0.120 &       & 0.070 \\
      17  & 4.607   & 0.000     &         &  0.973 &  0.945 &  0.904 &  0.847 &  0.779 &  0.700 &  0.617 &  0.532 & 0.450 & 0.374 & 0.306 & 0.247 & 0.196 & 0.154 & 0.120 & 0.093 & 0.070 \\
    \midrule
       3  & 4.613   & 0.003     & CCS \~D &  0.967 &        &        &        &        &        &        &        & 0.487 &       &       &       &       &       &       &       & 0.070 \\
       4  & 4.587   & 0.021     &         &  0.973 &        &        &        &  0.755 &        &        &        & 0.464 &       &       &       & 0.218 &       &       &       & 0.070 \\
       9  & 4.596   & 0.011     &         &  0.974 &        &  0.894 &        &  0.772 &        &  0.616 &        & 0.454 &       & 0.312 &       & 0.202 &       & 0.125 &       & 0.070 \\
      17  & 4.604   & 0.004     &         &  0.973 &  0.943 &  0.901 &  0.845 &  0.777 &  0.699 &  0.616 &  0.532 & 0.451 & 0.376 & 0.308 & 0.248 & 0.198 & 0.156 & 0.121 & 0.093 & 0.070 \\
    \midrule
       3  & 4.614   & 0.002     & BS \~D  &  0.972 &        &        &        &        &        &        &        & 0.450 &       &       &       &       &       &       &       & 0.071 \\
       5  & 4.608   & 0.000     &         &  0.972 &        &        &        &  0.779 &        &        &        & 0.450 &       &       &       & 0.196 &       &       &       & 0.071 \\
       9  & 4.607   & 0.000     &         &  0.972 &        &  0.903 &        &  0.779 &        &  0.617 &        & 0.450 &       & 0.306 &       & 0.196 &       & 0.120 &       & 0.071 \\
      17  & 4.607   & 0.000     &         &  0.972 &  0.945 &  0.903 &  0.847 &  0.779 &  0.700 &  0.617 &  0.532 & 0.450 & 0.374 & 0.306 & 0.247 & 0.196 & 0.154 & 0.120 & 0.093 & 0.071 \\
    \bottomrule
    \end{tabular}
  \label{tab:BS}
\end{sidewaystable}

\subsection{An Example with Call Option Prices on the S\&P 500}
\label{SPXExample}

In the second example, the market is given by CBOE option prices from 10 April 2010 with maturity 18 December 2010 and a forward $F = 1178.00$.

First, we calculate the Buchen-Kelly density in four cases, when $3,\ 5,\ 9$ and $17$ strikes are given, respectively,
with spacing decreasing from USD $200,\ 100,\ 50$ to $25$.

Second, we calculate the entropy maximising density from \cite{NeriSchneider2009} using digital prices obtained from the approximation \eqref{MiddleDigitalPrice}.
Again, at left and right endpoints we use the Buchen-Kelly prices.

Digital prices and entropy of the two densities are reported in Table \ref{tab:SPX}.

We also calculate the relative entropy of the second density with respect to the first density.

Again, it can be seen that, as the number of strikes increases, the approximations given by \eqref{MiddleDigitalPrice} become closer to the
Buchen-Kelly digital prices, and the relative entropy obtained using \eqref{RelativeEntropyForExponentialDensities} with the Buchen-Kelly density
as the prior density decreases.

\begin{sidewaystable}[hbp]
  \tiny
  \centering
  \caption{An Example with Call Option Prices on the S\&P 500}
    \begin{tabular}{rrrrrrrrrrrrrrrrrrrrr}
    \addlinespace
    \toprule
         &         & K         &         & 1000    & 1025    & 1050    & 1075    & 1100    & 1125    & 1150   & 1175   & 1200   & 1225   & 1250   & 1275   & 1300   & 1325   & 1350   & 1375  & 1400  \\
       n & Entropy & Rel. Ent. & \~C     & 207.919 & 186.796 & 166.526 & 146.958 & 128.294 & 110.583 & 93.925 & 78.522 & 64.473 & 51.879 & 40.791 & 31.308 & 23.431 & 17.059 & 12.042 & 8.228 & 5.168 \\
    \midrule
       3 &  6.6363 &    0.0000 & BK \~D  &   0.843 &         &         &         &         &         &        &        &  0.530 &        &        &        &        &        &        &       & 0.095 \\
       5 &  6.6345 &    0.0000 &         &   0.846 &         &         &         &   0.732 &         &        &        &  0.532 &        &        &        &  0.289 &        &        &       & 0.091 \\
       7 &  6.6325 &    0.0000 &         &   0.851 &         &   0.800 &         &   0.729 &         &  0.642 &        &  0.534 &        &  0.411 &        &  0.283 &        &  0.180 &       & 0.095 \\
      17 &  6.6234 &    0.0000 &         &   0.857 &   0.829 &   0.797 &   0.766 &   0.728 &   0.689 &  0.642 &  0.590 &  0.533 &  0.474 &  0.412 &  0.347 &  0.284 &  0.227 &  0.173 & 0.137 & 0.104 \\
    \midrule
       3 &  6.6314 &    0.0049 & CSS \~D &   0.843 &         &         &         &         &         &        &        &  0.507 &        &        &        &        &        &        &       & 0.095 \\
       5 &  6.6266 &    0.0079 &         &   0.846 &         &         &         &  0.717  &         &        &        &  0.524 &        &        &        &  0.297 &        &        &       & 0.091 \\
       7 &  6.6288 &    0.0037 &         &   0.851 &         &   0.796 &         &  0.726  &         &  0.638 &        &  0.531 &        &  0.410 &        &  0.287 &        &  0.183 &       & 0.095 \\
      17 &  6.6217 &    0.0017 &         &   0.857 &   0.828 &   0.797 &   0.765 &  0.728  &   0.687 &  0.641 &  0.589 &  0.533 &  0.474 &  0.411 &  0.347 &  0.285 &  0.228 &  0.177 & 0.137 & 0.104 \\
    \bottomrule
    \end{tabular}
  \label{tab:SPX}
\end{sidewaystable}

Note that in both examples given in \ref{BlackScholesExample} and \ref{SPXExample}, we can observe the pattern
that in-the-money ($K_i < F$) CCS digital price estimates tend to be too low
and out-of-the-money ($K_i > F$) ones too high compared to the digital prices obtained with the Buchen-Kelly density.
One could therefore try to establish better initial guesses, but we will not pursue this here.

\section{Conclusion}
\label{Conclusion}

In this article we study the family $\mathcal{G}$ of piecewise exponential densities matching given call option prices with different strikes
and for a fixed maturity $T$.
Considerations of entropy maximisation lead directly to this particular form, which has the advantages of being easy to work with (distribution functions
can be calculated and inverted analytically, for example) and generating very realistic looking ``volatility smiles''.

We make clear that despite the constraints imposed by the call option prices, the probability assigned by different densities in $\mathcal{G}$ for the underlying asset to lie in a given interval can vary considerably. This will obviously have a significant impact when pricing just about any type of derivative, beginning with digital options.
However, the more such call option prices are observable, the smaller the scope for these variations will become.
So, on the one hand, when only the forward price of the asset and an at-the-money call option price are known, the probability of, for example, the asset finishing in-the-money will change substantially, depending on the density chosen from $\mathcal{G}$.
On the other hand, in the theoretical case where call option prices are available at all strikes, these will be enough to completely determine the density
(the family will consist of just this one member), and therefore also the probability of the asset finishing in-the-money.

Giving digital option prices at the same strikes as for the calls uniquely determines a density in the family $\mathcal{G}$, and we therefore think of it as
being parameterised by these digital prices. An analysis of the entropy function $H$ over this family allows us to show that there is a unique density 	maximising $H$, the {\it Buchen-Kelly density}, which also is the only continuous density in $\mathcal{G}$.

Based on these results we introduce a Newton-Raphson algorithm for finding the Buchen-Kelly density that, in our opinion, is simpler than the one originally proposed,
and we list some of its advantages. Using the gradient of $H$, expressed as a function of digital prices, we give three convergence criteria for the algorithm.

Finally, in two market scenarios, one fictitious and the other with market data from the CBOE,
we illustrate how the densities matching these call prices converge to the Buchen-Kelly density when prices at more and more strikes are given.

\bibliographystyle{plain}
\bibliography{../bibtex/articles,../bibtex/books}

\vspace{2cm}

Cassio Neri and Lorenz Schneider have Ph.D.s in mathematics from Universities Paris IX and VI, respectively.

\end{document}